\newcommand{\pr}{'} \newcommand{\newpar}{{\vspace{0.3cm} \noindent}}
\newcommand{\indep}{\mbox{$\,\perp\!\!\!\perp\,$}}
\newcommand{\ie}{{\em i.e.\/}\xspace}  
 \newcommand{\etc}{{\em
    etc.\/}\xspace} \newcommand{\halm}{\hspace*{\fill} $\Box$\par}
\newtheorem{theorem}{Theorem}[section]
\newtheorem{definition}{Definition}[section]
\newtheorem{condition}{Condition} \newtheorem{cor}{Corollary}
\newcommand{\corref}[1]{\mbox{Corollary~\ref{cor:#1}}}
\newcommand{\Equationref}[1]{\mbox{Equation~(\ref{#1})}}
\newcommand{\Figureref}[1]{\mbox{Figure~\ref{#1}}}
\newcommand{\figref}[1]{\mbox{Figure~\ref{#1}}}
\newcommand{\Definitionref}[1]{\mbox{Definition~\ref{def:#1}}}
\newcommand{\defref}[1]{\mbox{Definition~\ref{def:#1}}}
\newcommand{\condref}[1]{\mbox{Condition~\ref{#1}}}
\newcommand{\theoremref}[1]{\mbox{Theorem~\ref{#1}}}
\newcommand{\secref}[1]{\mbox{\S$\,$\ref{#1}}}
\newcommand{\E}{\mbox{E}} 
\newcommand{\rosso}{}
\newcommand{\nero}{\color{black}\xspace}
\newtheorem{expl}{Example}
\newenvironment{ex}{\begin{expl}\rm}{\halm\end{expl}}
\newenvironment{proof}{\noindent {\bf Proof.
  }}{\halm\vspace{\baselineskip}}
\begin{document}

\begin{filecontents}{bibliografia.bib}

@Article{Guo2010,
     author = {Guo, Hui and Dawid, Alexander Philip},
    title = {Sufficient Covariates and Linear Propensity Analysis},
    journal = {Journal of Machine Learning Research Workshop and Conference Proceedings},
   year = {2010},
   volume = {9},
   pages = {281--288} }

  @article{3589, Author = {Kosuke Imai and Tingley, Dustin and Teppei
      Yamamoto}, Journal = {Journal of the Royal Statistical Society,
      Series A}, Pages = {5-51}, Title = {Experimental Designs for
      Identifying Causal                 Mechanisms (with
      {D}iscussion)}, Volume = 176, Year = 2013 }

  @inbook {Bernardinelli2012, title = {Ion Channels as a Possible
      Mechanism of                 Neurodegeneration in Multiple
      Sclerosis}, author = {Bernardinelli, Luisa and Berzuini, Carlo
      and Foco,                 Luisa and Pastorino, Roberta},
    publisher = {John Wiley and Sons, Ltd}, isbn = 9781119945710, doi
    = {10.1002/9781119945710.ch15}, pages = {208--217}, booktitle =
    {Causality}, year = 2012, }

  @article{BerzuiniDawid2012, Author = {Berzuini, Carlo and Dawid,
      A. Philip}, Journal = {Biostatistics}, Pages = {502--513},
    Volume = 14, Title = {Deep Determinism and the Assessment of
      Mechanistic                 Interaction}, Year = 2013, }

@Book{Cowell1999, author = {Cowell, R. G. and Dawid, A. P. and
        Lauritzen,                 S. L. and Spiegelhalter, D. J.},
      title = {Probabilistic Networks and Expert Systems}, publisher =
      {Springer}, year = {1999} }

  @article{Dawid1979, Author = {Dawid, A. P.}, Journal = {Journal of
      the Royal Statistical Society, Series B}, Pages = {1-31}, Title
    = {{Conditional independence in statistical theory}}, Volume = 41,
    Year = 1979 }

  @article{dawidjasa, Author = {Dawid, A. P.}, Journal = {Journal of
      the American Statistical Association}, Number = {450}, Pages =
    {407--424}, Priority = {4}, Title = {Causal Inference Without
      Counterfactuals}, Volume = {95}, Year = 2000 }

  @article{Dawid2002, Author = {A. P. Dawid}, Journal = {International
      Statistical Review}, Pages = {161--189}, Title = {Influence
      Diagrams for Causal Modelling and                 Inference},
    Volume = {70}, Year = {2002} }

@article{geiger90, Author = {Geiger, D. and Verma, T. and Pearl,
        J.}, Journal = {Networks}, Number = {5}, Pages = {507--534},
      Title = {Identifying independence in {B}ayesian networks},
      Volume = {20}, Year = {1990} }

  @article{Greenland1988, Author = {Greenland, S. and Poole, C.},
    Journal = {Scandinavian Journal of Work, Environment and
                      Health}, Pages = {125--129}, Title = {Invariants
      and noninvariants in the concept of
                      interdependent effects}, Volume = 14, Year =
    1988 }

@book{hausman:book, author = {Daniel Hausman}, title = {Causal
        Asymmetries}, publisher = {Cambridge University Press}, year =
      1998, address = {Cambridge} }

@article{lauritzen90, Author = {Lauritzen, S. L. and Dawid,
        A. P. and Larsen,                 B. N. and Leimer, H. G.},
      Journal = {Networks}, Number = {5}, Pages = {491--505}, Title =
      {Independence properties of directed {M}arkov fields}, Volume =
      {20}, Year = {1990} }

  @article{Lemmer2004, Acmid = 2226328, Address = {Piscataway, NJ,
      USA}, Author = {Lemmer, J. F. and Gossink, D. E.}, Doi =
    {10.1109/TSMCB.2004.834424}, Issn = {1083-4419}, Issue_Date =
    {December 2004}, Journal = {Trans. Sys. Man Cyber. Part B}, Month
    = {December}, Number = {6}, Numpages = {10}, Pages = {2252--2261},
    Publisher = {IEEE Press}, Title = {Recursive noisy OR - a rule for
      estimating complex                 probabilistic interactions},
    Volume = {34}, Year = {2004 }}

    @book{Pearl1988, Author = {J. Pearl}, Publisher = {Morgan
        Kaufmann}, Title = {{Probabilistic Reasoning in Intelligent
          Systems:                 Networks of Plausible Inference}},
      Year = 1988 }

    @inproceedings{pearldir, Address = {San Francisco, CA}, Author =
      {Pearl, Judea}, Booktitle = {Proceedings of the 17$^{th}$
        Conference on                 Uncertainty in Artificial
        Intelligence}, Pages = {411-420}, Publisher = {Morgan
        Kaufmann}, Title = {Direct and indirect effects}, Year = 2005
    }

    @article{price:bjps91,
      author = {Huw Price},
     title = {Agency and Probabilistic Causality},
      journal = {British Journal for the Philosophy of Science},
      year = {1991},
      volume = {42},
      pages = {157–-176} }

    @article {Ramsahai2013, author = {Ramsahai, Roland R.}, title =
      {Probabilistic causality and detecting collections of
                        interdependence patterns}, journal = {Journal
        of the Royal Statistical Society, Series B}, volume = 75,
      number = 4, issn = {1467-9868}, doi = {10.1111/rssb.12006},
      pages = {705--723}, keywords = {Boolean logic, Convex polytope,
        Interdependence,                 Likelihood ratio,
        Probabilistic causality,                 Sufficient causes},
      year = 2013, }

    @article{Robins1992, Author = {Robins, J. and Greenland, S.}, Year
      = 1992, Title = {Identifiability and exchangeability for direct
        and                 indirect effects}, Journal =
      {Epidemiology}, Volume = 3, Pages = {143-155} }

    @article{Rothman1976, Author = {Rothman, K. J.}, Journal =
      {American Journal of Epidemiology}, Pages = {587--592}, Title =
      {Causes}, Volume = 104, Year = 1976 }

    @book{Rothman1998, Author = {Rothman, K. J. and Greenland, S.},
      Publisher = {Philadelphia: Lippincott-Raven}, Title = {Modern
        Epidemiology}, Year = 1998 }

    @article{Rubin1974, Author = {Rubin, D. B.}, Journal = {Journal of
        Educational Psychology}, Pages = {688--701}, Title =
      {Estimating causal effects of treatments in
                        randomized and nonrandomized studies}, Volume
      = 66, Year = 1974 }

    @article{Skrondal2003, Author = {Skrondal, A.}, Journal =
      {American Journal of Epidemiology}, Pages = {251-258}, Title =
      {Interaction as Departure from Additivity in
                        Case-Control Studies: A Cautionary Note},
      Volume = 158, Year = 2003 }

@article{brader2008, author = {Valentino, Nicholas A. and Suhay,
        Elizabeth and                 Brader, Ted}, journal =
      {American Journal of Political Science}, number = 4, pages =
      {959-978}, title = {What Triggers Public Opposition to
        Immigration?:                 Anxiety, Group Cues, and
        Immigration Threat}, volume = 52, year = 2008 }

    @article{VanderWeele2009, Author = {VanderWeele, T. J.}, Journal =
      {Epidemiology}, Pages = {6--13}, Title = {Sufficient cause
        interactions and statistical                 interactions},
      Volume = 20, Year = 2009 }

    @article{VanderWeeleApplications2010, Author = {VanderWeele,
        T. J.}, Journal = {Statistical Applications in Genetics and
        Molecular                 Biology}, Pages = {1-22}, Title =
      {Epistatic interactions}, Volume = 1, Year = 2010 }

    @article{VanderWeeleBiometrika2010, Author = {VanderWeele, T. J.},
      Journal = {Biometrika}, Number = 3, Pages = {647-659}, Title =
      {Sufficient cause interactions for categorical and
                        ordinal exposures with three levels}, Volume =
      97, Year = 2010 }

    @article{VanderWeeleContinuous, Author = {VanderWeele, T. J. and
        Chen, Y and Ahsan, H.}, Journal = {Biometrics}, Pages =
      {1414--1421}, Title = {Inference for Causal Interactions for
        Continuous                 Exposures under Dichotomization},
      Volume = 67, Year = 2011 }

    @article{VanderWeeleLaird2011, Author = {VanderWeele, T. J. and
        Laird, N. M.}, Journal = {Annals of Human Genetics, Special
        Issue on                 Epistasis}, Pages = {146-156}, Title
      = {Tests for compositional epistasis under single
                        interaction-parameter models}, Volume = 75,
      Year = 2011 }

    @article{VanderWeeleNature2011, Author = {VanderWeele, T. J.},
      Journal = {Nature Reviews Genetics}, Pages = 166, Title =
      {Empirical tests for compositional epistasis}, Volume = 11, Year
      = 2011 }

    @article{VanderWeeleRobins2008, Author = {VanderWeele, T. J. and
        Robins, J. M.}, Journal = {Biometrika}, Number = 1, Pages =
      {49--61}, Title = {Empirical and counterfactual conditions for
                        sufficient cause interactions}, Volume = 95,
      Year = 2008 }

    @article{VanderWeeleRobinsAnnals, Author = {VanderWeele, T. J. and
        Robins, J. M.}, Journal = {Annals of Statistics}, Number = 3,
      Pages = {1437-1465}, Title = {Minimal sufficient causation and
        directed acyclic                 graphs}, Volume = 37, Year =
      2009 }

    @article{VanderWeeleStochastic, Author = {VanderWeele, T. J. and
        Robins, J. M.}, Journal = {Statistica Sinica}, Pages =
      {379--392}, Title = {Stochastic Counterfactuals and Sufficient
        Causes}, Volume = 22, Year = 2012 }

    @Book{woodward:book, author = {James Woodward}, title = {Making
        Things Happen: A Theory of Causal
                        Explanation}, publisher = {Oxford University
        Press}, year = {2003}, address = {Oxford} }

    @Book{woodward2013, author = {Woodward, James}, title = {Causation
        and Manipulability}, year = {2013}, address = {Stanford},
      editor = {Edward N. Zalta} }

    @inproceedings{Zagorecki2004, Author = {Zagorecki, Adam and
        Druzdzel, Marek}, Booktitle = {Proceedings of the 17th
        International Florida                 Artificial Intelligence
        Research Society Conference                 (FLAIRS 2004)},
      Pages = {880-885}, Editor = {Valerie Barr and Zdrawko Markov},
      Publisher = {AAAI Press}, Title = {An Empirical Study of
        Probability Elicitation under                 Noisy-OR
        Assumption}, Year = {2004} }

  \end{filecontents}

\title{\bf Stochastic Mechanistic Interaction}
 \author{Carlo Berzuini\thanks{Centre for Biostatistics, The University of Manchester, University
Place, Manchester M139PL, U.K. (carlo.berzuini@manchester.ac.uk)} \and
    A. Philip Dawid\thanks{Statistical Laboratory, Centre for
      Mathematical Sciences, University of Cambridge, Wilberforce Road, Cambridge CB30WB, U.K.
(apd@statslab.cam.ac.uk)}}
\maketitle

  \begin{abstract}
    \noindent We propose a fully probabilistic formulation of the
    notion of mechanistic interaction between the effects of putative
    causal factors $A$ and $B$ in producing an outcome event $Y$. We
    define mechanistic interaction in terms of departure from a
    generalized ``noisy OR'' model, under which the multiplicative
    causal effect of $A$ (resp., $B$) on the probability of positive
    outcome ($Y = 1$) cannot be enhanced by manipulating $B$ (resp.,
    $A$).We present conditions under which mechanistic interaction in
    the above sense can be assessed via simple tests on excess risk or
    superadditivity, in a possibly retrospective regime of
    observation.  These conditions are defined in terms of conditional
    independence relationships that can often be checked on a
    graphical representation of the problem. Inference about
    mechanistic interaction between direct, or path-specific, causal
    effects can be accommodated in the proposed framework.  The method
    is illustrated with the aid of a study in
    experimental psychology.

    \noindent{\bf Keywords}: Causal inference, compositional epistasis, direct effects,
    directed acyclic graphs, excess risk, experimental psychology,
    independent effects, noisy OR, observational studies,
    path-specific effects, superadditivity.
  \end{abstract}

  \section{Introduction}
  \label{Introduction}

  Consider an outcome event whose probability responds to
  manipulations of two variables, $A$ and $B$.  We are interested in
  whether the effects of $A$ and $B$ interact in producing the event
  in some fundamental mechanistic sense. For example, we might be
  interested in whether an environmental exposure $A$ interferes with
  the effect of a drug $B$ on a disease at some
  mechanistic---presumably molecular---level.  Such a relationship,
  which we shall make more formal in a later section of this paper, we
  call {\em mechanistic interaction\/}.

  \newpar Let the binary variable $Y$ indicate positive ($Y=1$) or
  negative ($Y=0$) outcome.  One might begin to investigate
  mechanistic interaction by fitting a regression model of the
  dependence of $Y$ on $(A,B)$ and then testing for presence of
  statistical $A \times B$ interaction, but such a test will depend on
  the chosen response scale, and will generally not be interpretable
  in any deep mechanistic sense.  Hence the need for a mathematical
  formalization of mechanistic (as opposed to statistical)
  interaction, and of the conditions under which this phenomenon can
  be detected from empirical data via appropriate, response-scale
  independent, statistical tests.  In many applications, discovery of
  mechanistic interaction could represent a step forward in the
  understanding of the studied system.  In genetics, evidence of
  mechanistic interaction between two genes with respect to a
  phenotype of interest could point to the molecular mechanisms
  implicated~\citep{Bernardinelli2012}.

  \newpar Ideally we would wish to assess mechanistic interaction by a
  controlled experiment, but this is often not possible or not
  convenient. Various authors have proposed tests for inferring
  mechanistic interaction (suitably defined) from observational data
  \citep{Rothman1976,Rothman1998,Greenland1988,
    Skrondal2003,VanderWeeleRobins2008, VanderWeeleRobinsAnnals,
    VanderWeele2009, VanderWeeleApplications2010,
    VanderWeeleBiometrika2010,VanderWeeleNature2011,
    VanderWeeleContinuous, VanderWeeleLaird2011}.  Consider, for
  example, the case where $A$ and $B$ are binary, and let $C$ denote a
  further (possibly empty) set of observed variables.  Let $R_{abc}$
  denote the observational risk of a positive outcome, $Y=1$,
  conditional on $A=a$, $B=b$, $C=c$.  Then, in certain observational
  situations, and under certain conditions, the following properties
  (of which the first is stronger than the second) have been shown to
  imply some form of mechanistic interaction:

  \vspace{-0.6cm}

\begin{align}
  \intertext{\em Excess risk:}
  \label{excessrisk}
  R_{11c}-R_{10c}-R_{01c} &> 0.  \\
  \intertext{\em Superadditivity:}
  \label{superadditivity}
  R_{11c}-R_{10c}-R_{01c} + R_{00c} &> 0.
\end{align}

\noindent These can be alternatively expressed as:
\begin{align}
  \intertext{\em Excess risk:}
  \label{Sexcess risk}
  S_{10c}+S_{01c}-S_{11c} &> 1,\\
  \intertext{\em Superadditivity:}
  \label{Ssuperadditivity}
  S_{10c}+ S_{01c} - S_{11c}- S_{00c} &> 0, \end{align} \noindent
where $S_{ijc} := 1 - R_{ijc}$ is the corresponding probability of
negative outcome (of $Y=0$). More precisely, the above conditions give
criteria for {\em synergistic\/} mechanistic interaction between $A$
and $B$ in {\em producing\/} the outcome event, in that the combined
effect (suitably measured) of increases in $A$ and in $B$ to increase
the probability of a positive outcome is greater than expected on the
basis of their individual effects. This is the interpretation we shall
maintain here. The case of {\em antagonistic\/} mechanistic
interaction, where the the combined effect is smaller than expected,
is readily handled by interchanging the values $0$ and $1$ for $Y$,
and interchanging the $R$s and $S$s in
equations~\eqref{excessrisk}--\eqref{Ssuperadditivity} and
elsewhere. An important property of the above tests is that they are
(at least approximately, under assumptions) testable under
retrospective sampling.

\newpar Most work to date on mechanistic interaction has been
formulated assuming the potential outcome (PO) framework
\citep{Rubin1974} for causality or some essentially equivalent
formulation, though the literature also offers some purely
probabilistic approaches. The former category is exemplified by the
stochastic PO approach of \citet{VanderWeeleStochastic}; the latter is
exemplified by previous work of the authors of this paper
\citep{BerzuiniDawid2012} and by the recent paper of
\citet{Ramsahai2013}. Section~\ref{Related work} discusses these
approaches and their
limitations. 

\section{Assumptions and notation}
\label{notation}

We are interested in the way the probability \rosso distribution \nero
of $Y$ would react to real or hypothetical manipulations of causal
factors $A$ and $B$, and in particular whether or not the effects of
$A$ and $B$ on $Y$ can be regarded as interacting in some fundamental
mechanistic sense.  In order to address this, we must first understand
what might be meant by ``no mechanistic interaction''. Here we suggest
a possible explication of this concept.  This however is not absolute,
but relative to an appropriately chosen ``context''.  That is, we
specify certain {\em context variables\/} $W$ in the problem, which
might modify in some way the dependence of $Y$ on $(A,B)$, and only
consider this dependence within a fixed context, \ie, conditional on
fixed values $W = w$.  Different choices of the variables in $W$ for
different causal questions are possible in the same problem.

\newpar In contrast to the formulation of \citet{BerzuiniDawid2012},
we do not require that $Y$ be a deterministic function of $(A, B,W)$,
but allow for a fully stochastic dependence of $Y$ on these
inputs. Note that this allows considerable freedom in the selection of
the context variables. Indeed, even in those rare cases when there
does exist a choice for $W$ supporting a deterministic relationship,
this might be regarded as inhabiting too deep a level of description
to be useful for the purpose at hand, and a more coarse-grained
choice, yielding a genuinely stochastic relationship, could be more
appropriate. In any given application, care must be taken to ensure
that we are arguing at a suitable level of granularity. As an analogy,
for most purposes it is appropriate to think of the determination of
the sex of an embryo as governed by a random process (essentially a
fair coin toss), even though a very detailed description of the
positions, motions, properties and behaviours of the gametes prior to
fertilisation might allow deterministic prediction.

\newpar Our definition of ``no mechanistic interaction'' will relate
to a (possibly hypothetical) ``interventional regime'', in which the
values of $A$ and $B$ are set by some external agent or
process. However, the data available to investigate this property will
generally have been generated under some other, typically purely
observational, regime, where, in particular, the values of $A$ and $B$
have arisen in some uncontrolled stochastic way. We will need to be
able to relate these regimes in order to transfer information from one
to the other. To streamline this task we introduce the {\em regime
  indicator\/} $\sigma_{AB}$, a non-stochastic variable, where
$\sigma_{AB} = ab$ indicates the interventional regime where $A$ is
set to $a$ and $B$ to $b$, and $\sigma_{AB} = \emptyset$ the
observational regime. More generally, $\sigma_{X}$ will denote a
similar regime indicator for interventions on a set $X$ of variables
of interest.

\newpar We introduce the symbol
\begin{equation}
  \label{pi}
  \pi_{w}(a,b) := P(Y=0 \mid  W=w, \sigma_{AB}= ab),
\end{equation}
\noindent for the probability of a negative outcome in context $W=w$,
when $A$ and $B$ are set to respective values $a$ and $b$ by an
exogenous intervention.  Then one way---which we shall henceforth
adopt---of understanding the effect $(A,B)$ on $Y$ is by considering
the dependence of $\pi_{w}(a,b)$ on $(a,b)$. We shall measure the
effect exerted on $Y$ by a change of the values set for $(A,B)$ from
$(a^{'},b^{'})$ to $(a,b)$ by the {\em relative probability of
  negative outcome} (RPNO),
\begin{equation*}
  \frac{\pi_{w}(a,b) }{ \pi_{w}(a^{'},b^{'})},
\end{equation*}
\rosso
\noindent with the value $1$ representing ``no effect''.  \nero

\subsection{Structural conditions}
\label{Assumptions}

We shall impose the following structural conditions:

\begin{condition}
  \label{struct1}
  The causal factors $A$ and $B$ are continuous or ordered categorical
  scalar random variables.
\end{condition}
\begin{condition}
  \label{struct2}
  The effects of $A$ and $B$ on $Y$ are positive: for any fixed
  $(b,w)$ (resp., $(a,w)$), $P (Y = 1 \mid W=w, \sigma_{AB}=ab)$ is a
  non-decreasing function of $a$ (resp., $b$).
\end{condition}
An alternative expression of \condref{struct2} is that $\pi_{w}(a,b)$,
given by \eqref{pi}, is, for each $w$, a non-increasing function of
each of $a$ and $b$.

\section{No mechanistic interaction}
\label{Independent effects}

We henceforth make the structural assumptions of the previous section.

\subsection{Point null hypothesis}
\label{secpoint}
One possible way of expressing the concept of {\em no mechanistic
  interaction between $A$ and $B$ in producing\/} the outcome event is
that, for all $w$, we can express $\pi_{w}(a,b)$ in the product form
\begin{equation}
  \label{pointnull}
  \pi_{w}(a,b) = \lambda_{w}(a) \mu_{w}(b)
\end{equation} for all $a\in{\cal A}$, $b\in{\cal B}$.
We term this the {\em point null hypothesis\/}.

\newpar \Equationref{pointnull} can alternatively be expressed as
requiring:

\newpar For all $a,a\pr\in{\cal A}$, $b,b\pr\in{\cal B}$,
\begin{equation}
  \label{pointnull1}
  {\pi_{w}(a,b)}{\pi_{w}(a\pr,b\pr)} =
  {\pi_{w}(a,b\pr)}{\pi_{w}(a\pr,b)}
\end{equation}
or, when the denominators are positive,
\begin{equation}
  \label{pointnull2}
  \frac{\pi_{w}(a,b)}{\pi_{w}(a\pr,b)}  =
  \frac{\pi_{w}(a,b\pr)}{\pi_{w}(a\pr,b\pr)}
\end{equation}
or
\begin{align}
  \pi_{w}(a,b) \; &= \frac{\pi_{w}(a\pr,b) \; \pi_{w}(a,b\pr)}
  {\pi_{w}(a\pr,b\pr)}.
  \label{pointnull3}
\end{align}

\noindent When both $A$ and $B$ are binary this is equivalent to the
single requirement
\begin{equation}
  \label{leaky}
  \pi_{w}(1,1) = \pi_{w}(1,0)\pi_{w}(0,1)/\pi_{w}(0,0).
\end{equation}
In the special case that $\pi_{w}(0,0)=1$, this becomes
\begin{equation}
  \label{noisy}
  \pi_{w}(1,1) = \pi_{w}(1,0)\pi_{w}(0,1).
\end{equation}
Imposing the further requirements $\pi_{w}(1,0) = \pi_{w}(0,1) = 0$
would now imply
\begin{equation}
  \label{OR}
  \pi_{w}(1,1) = 0,
\end{equation}
and this constellation of values represents $Y$ as the Boolean
expression \mbox{$A$ OR $B$}.  The intermediate case of
$\pi_{w}(0,0)=1$ together with \eqref{noisy} is the ``noisy OR''
generalisation of this, while the general expression \eqref{leaky} is
``leaky noisy OR'' \citep{Pearl1988,Lemmer2004, Zagorecki2004}.

\newpar \rosso The point null hypothesis models the negative outcome
as the result of two uncertain causes failing to produce $Y=1$, the
first cause failing with a probability that depends on $W$ and on the
value we force on $A$, and the second cause failing with a probability
that depends on $W$ and on the value we force on $B$.  \nero
imagine Player 1 being assigned a ball of size $A=a$ and Player 2 a
ball of size $B=b$, and then each player being invited to knock his
respective pin down.  Let us call it a positive event ($Y=1$) when at
least one of the two players knocks the pin over, and let $Y=0$
indicate instead that no pin is knocked over.  Think of $\pi_w(a,b)$
as representing the probability of $Y=0$ with assigned ball sizes
$(a,b)$ and with $W=w$ indicating specific circumstances such as air
humidity, temperature.  Structural condition \ref{struct2} will be
satisfied here if we assume that each player's ability to knock the
pin down will not decrease on being assigned a bigger ball. In the
context of this example the point null hypothesis, as expressed by
Equation \eqref{pointnull}, asserts that the probability of $Y=0$ is
the product of the probability $\lambda_w(a)$ that Player 1 fails to
knock the pin down, and the probability $\mu_w(b)$ that Player 2
fails. The mechanistic interpretation of this being that the
performance of one player is not affected by the size of the ball
given to the other player. In the context of this example, Equation
\eqref{pointnull2} explicitly states that the RPNO effect of giving
one player a larger ball will not be changed by giving a larger ball
also to the other one.


\subsection{Interval null hypothesis}
\label{secint}
Taking into account that we are only interested in synergistic (as
opposed to antagonistic) interaction, we can weaken the above point
null hypothesis, as expressed by \eqref{pointnull2} or
\eqref{pointnull3}, to yield the following {\em interval null
  hypothesis\/}:

\newpar For all $w$, and all $a\ge a\pr\in{\cal A}$, $b\geq
b\pr\in{\cal B}$, we have
\begin{equation}
  \label{intnull2}
  \frac{\pi_{w}(a,b)}{\pi_{w}(a\pr,b)}  \geq
  \frac{\pi_{w}(a,b\pr)}{\pi_{w}(a\pr,b\pr)}.
\end{equation}
\rosso
\noindent On account of \condref{struct2}, both sides of the
inequality are $ \le 1$, as they represent the RPNO effect of
increasing $A$ while keeping $B$ fixed at a particular value. A large
departure from $1$ means the effect is strong.  The inequality states
that an interventional increase in $B$ will not strengthen the effect
of an interventional increase in $A$. The interpretation holds with
 the roles of $A$ and $B$ interchanged.  \nero
\noindent We may equivalently express the interval null hypothesis as
\begin{align}
  \pi_{w}(a,b) \; &\geq \frac{\pi_{w}(a\pr,b) \; \pi_{w}(a,b\pr)}
  {\pi_{w}(a\pr,b\pr)}.
  \label{intnull3}
\end{align}

\subsection{Mechanistic interaction}
\label{mechint}
\begin{definition}
  \label{def:mechint}
  We say that the causal factors $A$ and $B$ exhibit {\em mechanistic
      interaction\/}, or that their effects {\em interfere\/}, in {\em
      producing\/}   a positive outcome,   when the interval null
  hypothesis (and so {\em a   fortiori\/} the point null hypothesis)
  fails: that   is to say, when there exists a value $w$ of the
  context variable $W$ and values $a > a'$  for $A$, and $b > b'$ for
  $B$, such that
  \begin{equation}
      \label{eqinter}
      \pi_w(a,b)\pi_w(a',b') < \pi_w(a',b)\pi_w(a,b').
  \end{equation} \end{definition} \noindent When this holds, we write $A * B \,[W]$, \rosso or $A*B$ when $W \equiv \emptyset$. \nero This notation makes it explicit that the property is relative to the specified context variable $W$. \nero 

\newpar Because we focus on synergy, we have defined ``interference to
produce''; we could similarly define its antagonism counterpart,
``interference to prevent''.

\newpar If Equation~(\ref{eqinter}) holds then clearly $\pi_{w}(a',b)
0$, $\pi_{w}(a,b') >0$.  Also $\pi_w(a',b') >0$ for, if it were not
so, all terms of the equation would be 0 by virtue of
~\condref{struct2}.  Thus \Definitionref{mechint} applies just when
there exist $w$, $a>a'$, $b>b'$ such that
\begin{equation}
  \label{synergism 5}
  \frac{\pi_w(a,b)}{\pi_{w}(a',b)} <
  \frac{\pi_w(a,b')}{\pi_{w}(a',b')}.
\end{equation}




\newpar Inequality~(\ref{eqinter}) represents a stochastic extension
of the deterministic mechanistic interaction concept of
\cite{BerzuiniDawid2012}.  Under such deterministic dependence of $Y$
on $(A,B,W)$, each term in (\ref{eqinter}) can only take values $0$ or
$1$.  Together with Condition~\ref{struct2}, this implies:
\begin{equation}
  \label{extension}
  \pi_w(a,b) = 0 < \pi_{w}(a',b) =
  \pi_w(a,b') = \pi_{w}(a',b') = 1.
\end{equation}
\noindent The above inequality says that there are values $b, b' \in
{\cal B}$ such that, in some context $W=w$, when we set $B = b$ a
manipulation of $A$ from $a'$ to $a$ causes $Y$ to change from $0$ to
$1$; whereas, in the same context, when we set $B=b'$, the same
manipulation makes no difference to $Y$.  In other words, whenever $Y$
is deterministic, presence of mechanistic interaction in our
formulation is characterized by the fact that we can prevent a certain
manipulation of $A$ from producing the outcome event through an
appropriate action on $B$; and {\em vice versa\/}.  If further $A$ and
$B$ are binary, then $(a,a',b,b')= (1,0,1,0)$, and the definition says
that $A$ and $B$ interact mechanistically in producing $Y=1$ when
there exists a value $w$ of the context variable $W$ such that the
dependence of $Y$ on $(A,B)$ obeys the Boolean conjunction law: $Y =A
\wedge B$.

\section{Observational identification of mechanistic interaction}
\label{obsid}

We now consider how we might use observational data to assess the
presence or absence of mechanistic interaction between the effects of
$A$ and $B$ on $Y$.  We shall do this be means of a set $C \subseteq
W$ of observed context variables; the remaining variables $U =
W\setminus C$ may be observed or unobserved.

\newpar We shall need to consider, in addition to the structural
conditions of \secref{Assumptions}, some {\em causal conditions\/},
relating the behaviours under observational and interventional
circumstances.  These we express as follows, where we have used the
symbol $\indep$ for ``conditionally independent
of''~\citep{Dawid1979,Dawid2002}.

\begin{condition}
  \label{caus1}
  $Y \indep \sigma_{AB} \mid (A,B,W)$.
\end{condition}

\begin{condition}
  \label{caus2}
  $U \indep (A,B,\sigma_{AB}) \mid C$.
\end{condition}

\newpar Finally, we shall sometimes require observational independence
between $A$ and $B$, conditional on $C$:
\begin{condition}
  \label{caus3} $A \indep B \mid (C, \sigma_{AB})$.
\end{condition}

\newpar \condref{caus1} requires that the effects of $A$ and $B$ on
$Y$ be ``unconfounded'', conditional on the context variables
$W$. \condref{caus2} says that, conditional on $C$, the distribution
of $U$ is fixed: the same under both interventional and observational
conditions and, further, independent of the values of $A$ and
$B$. \condref{caus3} holds trivially for an interventional regime
$\sigma_{AB} = ab$, so only has force for the observational regime
$\sigma_{AB} = \emptyset$. It is a strong condition, but in certain
circumstances can be avoided---see \corref{nonindep} below.

\subsection{Causal diagrams}
\label{Causal diagrams}

\todo[inline]{I have re-written this section}

It will sometimes be possible to represent a coherent set of causal
and conditional independence assumptions about the problem by using an
extension \citep{Dawid2002} of the methodology of directed acyclic
graphs \citep{Cowell1999}.  The extension, called an {\em influence
  diagram\/} (ID), incorporstes the relevant regime indicators in the
graph as decision nodes sending arrows into the variables they relate
to. The resulting ID expresses conditional independence relationships
between problem variables and regime indicators.  These independencies
can be read off the graph with the aid of a graphical criterion such
as {\em $d$-separation\/} \citep{geiger90} or its {\em moralisation\/}
equivalent \citep{lauritzen90}.  By so doing, we can check whether the
required conditions for the validity of our interaction tests are
satisfied.

\newpar \Figureref{fig1}, for example, might represent the effects of
genetic variants $A$ and $B$ on myocardial infarction $Y$, possibly
mediated by obesity $M$, with $G$ representing a set of observed
descriptors (diet, social status, \etc) of socio-economic status.  In
the same diagram, the regime indicators $\sigma_{AB}$ and $\sigma_{M}$
specify the regimes under which the values of $(A,B)$ and of $M$,
respectively, are generated, be it by passive observation or by
intervention.


\newpar For a problem that can be modelled by the ID of
\Figureref{fig1}, causal conditions \ref{caus1}--\ref{caus3} follow by
application of the moralisation criterion to the graph if we choose
the sets $W, U$ and $C$ to be empty.

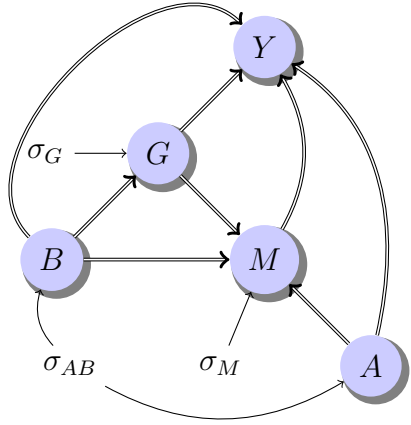
\begin{figure}[bt]
  \centering
  \begin{tikzpicture}[align=center,node distance=2cm]
      \node (Y) [circular drop shadow,decorate, fill=blue!20,circle]
      {$Y$};   \node (Z) [circular drop shadow,below left
    of=Y,decorate,   fill=blue!20,circle] {$G$};   \node (M) [circular
    drop shadow,below right of=Z,   decorate, fill=blue!20,circle]
    {$M$};   \node (A)[circular drop shadow,below right of=M,decorate,
      fill=blue!20,circle] {$A$};   \node (B) [circular drop
    shadow,below left of=Z,   decorate, fill=blue!20,circle] {$B$};
      \node (SM) [left of=A] {$\sigma_{M}$};   \node (SG) [left of=Z,
    node distance=1.5cm] {$\sigma_{G}$};   \node (SAB) [left of=SM]
    {$\sigma_{AB}$};
      \draw[->,double, bend right] (M) to node {} (Y);
      \draw[->,double](Z) to node {} (Y);
      \draw[->,double] (Z) to node {} (M);   \draw[->,double] (A) to
    node {} (M);   \draw[->,double, bend right,   in=-140] (A) to node
    {} (Y);   \draw[->,double, bend left, out=90,   in=90] (B) to node
    {} (Y);   \draw[->,double] (B) to node {} (M); \draw[->,double]
    (B) to node {} (Z);   \draw[->] (SM) to node {} (M);  \draw[->]
    (SG) to node {} (Z);   \draw[->, bend right] (SAB) to node {} (A);
      \draw[->, bend left] (SAB) to node {} (B);
  \end{tikzpicture}
  \caption{\small In an influence diagram such as this, variables may
    depend on their predecessors in the graph in a fully stochastic
    way. Regime nodes, here $\sigma_{AB}$, $\sigma_{M}$ and
    $\sigma_{G}$, determine whether the variables into which they send
    arrows are manipulated (interventional regime) or observed
    (observational regime). This diagram is also used in Examples
    \ref{ex:Example1}-3}
  \label{fig1}
\end{figure}

\subsection{Main theorem}
\label{sec:main}

Our observational criterion for mechanistic interaction will involve a
dichotomisation of the ranges of $A$ and $B$, determined by respective
``cutoff thresholds'' $\tau_A$ and $\tau_B$.  Let $\alpha$ be the
indicator variable of ``$A > \tau_A$'', and $\beta$ the indicator
variable of ``$B > \tau_B$''.  The symbol $R_{ijc}$ is henceforth
reinterpreted as:
\begin{align}
  R_{ijc} &= P(Y=1 \mid \alpha=i, \beta=j,             C=c,
  \sigma_{AB}= \emptyset),
  \label{reinterpretation}
\end{align}
\noindent and likewise $S_{ijc} = 1-R_{ijc}$.  We reinterpret the
inequalities (\ref{excessrisk})--(\ref{Ssuperadditivity})
correspondingly. Note that $R_{ijc}$ is estimable from data on
variables $A$, $B$ and $C$, gathered under the observational regime.

\begin{theorem}
  \label{main}
  Assume Conditions~\ref{struct1}  and \ref{struct2}, and that
  Conditions~ \ref{caus1}--\ref{caus3} hold in stratum $C=c$.  \rosso
  Assume also the validity of the {\em uniform positivity condition},
  that $\pi_{cu}(\tau_A,\tau_B) > 0$ for all $u$ (or
  $\pi_{c}(\tau_A,\tau_B) > 0$, when $U$ is empty).  \nero Then the
  presence of superadditivity in stratum $C=c$ implies $A*B\, [W]$.
\end{theorem}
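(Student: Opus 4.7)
The plan is to argue the contrapositive: if the interval null hypothesis \eqref{intnull3} holds at every $w=(c,u)$ (with $C=c$ fixed), then $S_{10c}+S_{01c}-S_{11c}-S_{00c}\le 0$, contradicting strict superadditivity. I first re-express each observational $S_{ijc}$ as an average of interventional probabilities. \condref{caus1} replaces $P(Y=0\mid A=a,B=b,W=(c,u),\sigma_{AB}=\emptyset)$ by $\pi_{cu}(a,b)$; \condref{caus2} leaves the conditional law $P(u\mid c)$ invariant under $(\alpha,\beta,\sigma_{AB})$; and \condref{caus3} makes $A$ and $B$ conditionally independent given $C=c$ in the observational regime. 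Writing $G_i$ and $H_j$ for the conditional laws of $A$ given $(\alpha=i,C=c)$ and of $B$ given $(\beta=j,C=c)$ under $\sigma_{AB}=\emptyset$, we obtain
\begin{equation*}
S_{ijc}\;=\;\int \pi_{cu}(a,b)\,dG_i(a)\,dH_j(b)\,dP(u\mid c),
\end{equation*}
and a short algebraic rearrangement gives
\begin{equation*}
S_{10c}+S_{01c}-S_{11c}-S_{00c}\;=\;\int \pi_{cu}(a,b)\,\bigl[dG_1(a)-dG_0(a)\bigr]\,\bigl[dH_0(b)-dH_1(b)\bigr]\,dP(u\mid c).
\end{equation*}

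Since $G_1$ stochastically dominates $G_0$ (the two being the conditional laws of $A$ on $\{A>\tau_A\}$ and $\{A\le\tau_A\}$ respectively) and similarly $H_1\succeq H_0$, it suffices to show that for each fixed $u$ the inner integral $I(u):=\int \pi_{cu}(a,b)[dG_1-dG_0][dH_0-dH_1]$ is $\le 0$. Define $\phi(a):=\int \pi_{cu}(a,b)[dH_0(b)-dH_1(b)]$. By \condref{struct2}, $\pi_{cu}(a,\cdot)$ is non-increasing, so $\phi(a)\ge 0$ by $H_1\succeq H_0$; and if $\phi$ is additionally non-increasing in $a$, the outer integration against $dG_1-dG_0$ combined with $G_1\succeq G_0$ yields $I(u)=\int\phi(a)[dG_1-dG_0]\le 0$. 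The problem thus reduces to showing that $\phi$ is non-increasing.

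Monotonicity of $\phi$ is the analytic heart of the argument and the main obstacle. For $a_1<a_2$, put $\Delta(b):=\pi_{cu}(a_1,b)-\pi_{cu}(a_2,b)\ge 0$ (by \condref{struct2}); then $\phi(a_1)-\phi(a_2)=\int\Delta(b)[dH_0-dH_1]\ge 0$ provided $\Delta$ is non-increasing in $b$. The interval null \eqref{intnull3} is exactly the TP2 (log-supermodularity) property of $\pi_{cu}$ in $(a,b)$, so wherever $\pi_{cu}(a_2,b)>0$ the ratio $r(b):=\pi_{cu}(a_1,b)/\pi_{cu}(a_2,b)$ is non-increasing in $b$ and $\ge 1$; writing
\begin{equation*}
\Delta(b)=\pi_{cu}(a_2,b)\bigl(r(b)-1\bigr)
\end{equation*}
exhibits $\Delta$ as the product of two non-negative, non-increasing factors, hence non-increasing. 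The set $\{b:\pi_{cu}(a_2,b)=0\}$ is an upper set in $b$ by \condref{struct2}; there $\Delta(b)=\pi_{cu}(a_1,b)$ is non-increasing by the same condition, and a short case split using TP2 verifies that $\Delta$ does not jump up across the boundary. The uniform positivity assumption secures $\pi_{cu}>0$ on $\{a\le\tau_A,\,b\le\tau_B\}$, ensuring the ratios used here are well-defined. Integrating $I(u)\le 0$ against $dP(u\mid c)$ yields $S_{10c}+S_{01c}-S_{11c}-S_{00c}\le 0$, and the contrapositive of strict superadditivity forces \eqref{intnull3} to fail at some $w$ and some $a>a'$, $b>b'$, which is precisely $A*B\,[W]$.
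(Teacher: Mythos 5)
Your proof is correct, and it reaches the theorem by a genuinely different route from the paper's, even though the two share the same outer skeleton: argue by contraposition, and use \condref{caus1}--\condref{caus3} to express each observational $S_{ijc}$ as a mixture over $P(u\mid c)$ of product-form expectations of $\pi_{cu}(a,b)$. Where you diverge is the key analytic step. The paper never proves supermodularity of $\pi_{cu}$: it first extracts a single $u^*$ at which the conditional superadditivity contrast is positive, then applies the interval null \eqref{intnull3} only to rectangles anchored at the threshold corner $(\tau_A,\tau_B)$, yielding the four scalar bounds $\pi_0 S^*_{11}\ge S_AS_B$, $\pi_0 S^*_{00}\ge S_{\overline A}S_{\overline B}$, $\pi_0 S^*_{01}\le S_{\overline A}S_B$, $\pi_0 S^*_{10}\le S_AS_{\overline B}$ (with \condref{caus3} used to factorize the relevant expectations), and from these derives $(S_A-S_{\overline A})(S_B-S_{\overline B})<0$, contradicting the fact that \condref{struct2} sandwiches $S_A\le\pi_0\le S_{\overline A}$ and $S_B\le\pi_0\le S_{\overline B}$. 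You instead prove the stronger, pointwise-in-$u$ statement that the full signed bilinear form $\int\pi_{cu}\,[dG_1-dG_0]\,[dH_0-dH_1]$ is non-positive, via the lemma that a non-negative, componentwise non-increasing, TP2 function is supermodular (your factorization $\Delta(b)=\pi_{cu}(a_2,b)\,(r(b)-1)$ as a product of non-negative non-increasing factors), followed by two applications of first-order stochastic dominance. Your route makes the structural content transparent (interval null $=$ TP2, which with monotonicity yields supermodularity and hence a correlation-type inequality), bounds the contrast for every $u$ rather than exhibiting one contradictory $u^*$, and handles the zero set of $\pi_{cu}$ more carefully than the paper, which divides by $\pi^*(a',b')$ inside expectations; the paper's route is shorter and works with directly interpretable one-factor survival contrasts $S_A,S_{\overline A},S_B,S_{\overline B}$. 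Two small slips in your write-up, neither fatal: the reduction to fixed $u$ follows from linearity of the mixture over $P(u\mid c)$, not from the dominance relations $G_1\succeq G_0$ and $H_1\succeq H_0$, which you use only afterwards inside $I(u)$; and the non-negativity of $\phi$ plays no role in the final inequality --- only its monotonicity does.
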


\begin{proof}
  We proceed by assuming both superaddivity and the interval null
  hypothesis, and deriving a contradiction.  In the following, all
  probabilities and expectations are taken under the observational
  regime $\sigma_{AB} = \emptyset$.

  \newpar Using \condref{caus2}, we have, for $i, j = 0, 1$:
  \begin{displaymath}
      S_{ijc} = \E \left(S_{ijcU} \mid C = c\right)
  \end{displaymath}
  where
  \begin{displaymath}
      S_{ijcu} := P(Y = 0 \mid \alpha = i, \beta = j, C = c, U= u).
  \end{displaymath}
  Thus
  \begin{equation}
      \label{uu}
      S_{10c}+ S_{01c} - S_{11c}- S_{00c} = \E\left(S_{10cU}+ S_{01cU} - S_{11cU}- S_{00cU} \mid C = c\right).
  \end{equation}
  \noindent Let $U^* := \{u: S_{10cu}+ S_{01cu} - S_{11cu}- S_{00cu} >
  0\}$.  On account of \eqref{uu} and \eqref{Ssuperadditivity}, $P(U^*
  \mid C = c) > 0$; in particular, $U^*$ is non-empty.  Fix any $u^*
  \in U^*$, and define $S^*_{ij} := S_{ijcu^*}$, $\pi^*(ab) :=
  \pi_{cu^*}(a,b)$.  Then
  \begin{equation}
      \label{ssss}
      S^*_{11}+ S^*_{00} -   S^*_{10} - S^*_{01} < 0.
  \end{equation}

  \noindent Now by virtue of Conditions~\ref{caus1} and \ref{caus2},
  \begin{equation}
      \label{iniziale}
      S^*_{ij}  = \E\{\pi^*(A,B) \mid \alpha = i, \beta=j, C=c\}.
  \end{equation}

  \noindent Let $a > \tau_A$ and $b > \tau_B$.  By virtue of the
  interval null hypothesis \eqref{intnull3} we then have
  \begin{equation}
      \label{comp}
      {\pi_0}\,\pi^*(a,b) \geq \pi^*( a,\tau_B)  \pi^*( \tau_A, b)
  \end{equation}
  where $\pi_0 := \pi^*(\tau_A,\tau_B)>$ 0 by assumption; hence
  \begin{eqnarray}
      \label{prod1}
      \pi_0 S^*_{11} &\geq&  \E\{\pi^*( A,\tau_B)  \pi^*( \tau_A, B)
      \mid \alpha = 1, \beta=1, C=c\}\\
      \nonumber
      &=& \E\{\pi^*( A,\tau_B)
      \mid \alpha = 1,  C=c\}\\
      \nonumber
      &&{}\times\E\{ \pi^*( \tau_A, B)
      \mid \beta=1, C=c\}
  \end{eqnarray}
  on using \condref{caus3}.  That is, defining
  \begin{eqnarray*}
      S_A &: =& \E\left\{\pi^*(A,\tau_B) \mid  \alpha = 1, C=
          c\right\}\\    S_{\overline A} &: =&
      \E\left\{\pi^*(A,\tau_B) \mid  \alpha = 0, C=
          c\right\}\\    S_B &: =& \E\left\{\pi^*(\tau_A,B) \mid  \beta = 1, C=
          c\right\}\\
      S_{\overline B} &: =& \E\left\{\pi^*(\tau_A,B) \mid \beta = 0, C= c\right\},
  \end{eqnarray*}
  we have
  \begin{align}
      \label{ineq11}
      \pi_0 S^*_{11} &\geq S_A S_B.\\
      \intertext{Similarly,}   \label{ineq00}
      \pi_0 S^*_{00} &\geq S_{\overline A} S_{\overline B},\\
      \intertext{while}   \label{ineq01}
      \pi_0 S^*_{01} &\leq S_{\overline A} S_B,\\
      \label{ineq10}   \pi_0 S^*_{10} &\leq S_A S_{\overline B}.
  \end{align}

  \noindent From these inequalities and \eqref{ssss}, we obtain $(S_A
  - S_{\overline A})(S_B - S_{\overline B}) < 0$.  But by virtue of
  \condref{struct2}, we have $S_A \leq \pi_0 \leq S_{\overline A}$,
  $S_B \leq \pi_0 \leq S_{\overline B}$, yielding $(S_A - S_{\overline
     A})(S_B - S_{\overline B}) \geq 0$. This contradiction proves the
  theorem.
\end{proof}

\begin{cor}
  \label{cor:nonindep}
  Suppose that both $A$ and $B$ are binary.  Then the conclusion of
  \theoremref{main} holds even if we remove the independence
  requirement of \condref{caus3}.
\end{cor}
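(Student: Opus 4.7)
The plan is to re-examine the proof of \thmref{main} and isolate the single place where \condref{caus3} was used, then show that in the binary setting this use becomes vacuous. The only invocation of \condref{caus3} in that proof is in passing from the second to the third line of the display \eqref{prod1}, namely in factorising
\begin{equation*}
  \E\{\pi^*(A,\tau_B)\,\pi^*(\tau_A,B) \mid \alpha = 1, \beta = 1, C = c\} = S_A \, S_B,
\end{equation*}
and in the analogous factorisations underlying \eqref{ineq00}--\eqref{ineq10}.

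When $A$ and $B$ are binary, we are free to choose the thresholds $\tau_A, \tau_B$ anywhere in $[0,1)$, so that $\alpha = A$ and $\beta = B$ identically. Conditioning on $\alpha = i$ then pins $A$ to the value $i$, and similarly $\beta = j$ pins $B = j$. Consequently, inside each of the four conditional expectations appearing in the proof, the integrands $\pi^*(A,\tau_B)$ and $\pi^*(\tau_A,B)$ are deterministic constants, so the conditional expectation of their product factors trivially as a product of constants, without any appeal to conditional independence of $A$ and $B$ given $(C,\sigma_{AB})$. In particular $S_A = \pi^*(1,\tau_B)$, $S_{\overline A} = \pi^*(0,\tau_B)$, $S_B = \pi^*(\tau_A,1)$, $S_{\overline B} = \pi^*(\tau_A,0)$, and inequalities \eqref{ineq11}--\eqref{ineq10} are obtained without \condref{caus3}.

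The remainder of the proof of \thmref{main}---summing the four inequalities, combining with \eqref{ssss} to get $(S_A - S_{\overline A})(S_B - S_{\overline B}) < 0$, and contradicting this against the non-negativity of the same product that follows from the monotonicity \condref{struct2}---is formally unchanged. Thus superadditivity in stratum $C = c$ still implies $A*B\,[W]$, and \condref{caus3} may be dropped.

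There is no real obstacle here: the whole content of the corollary is the observation that when the dichotomisation thresholds render $\alpha \equiv A$ and $\beta \equiv B$, the conditioning event already determines the values of $A$ and $B$, so the independence hypothesis that was needed to separate the expectation of a product into a product of expectations becomes automatic.
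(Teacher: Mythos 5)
Your proof is correct and follows essentially the same route as the paper's: both observe that in the binary case the thresholds make $\alpha\equiv A$ and $\beta\equiv B$, so conditioning on $(\alpha,\beta)=(i,j)$ fixes $A$ and $B$, the integrands in \eqref{prod1} become constants, and the factorisation yielding \eqref{ineq11}--\eqref{ineq10} holds without \condref{caus3}. Your identification of $S_A=\pi^*(1,\tau_B)$ and so on matches the paper's computation, and the rest of the argument is unchanged in both versions.
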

\begin{proof}
  In this case $\tau_A = \tau_B = 0$, and \eqref{prod1} becomes
  \begin{eqnarray*}
      \pi_0 S^*_{11} &\geq&  \pi^*( 1,0)  \pi^*( 0, 1)\\
      &=& S_A S_B
  \end{eqnarray*}
  so that \eqref{ineq11}, and similarly \eqref{ineq00},
  \eqref{ineq01}, \eqref{ineq10}, continue to hold even without
  assuming independence.  The rest of the proof is unchanged.
\end{proof}

\newpar For the next Corollary we introduce the following weaker form
of \condref{struct2}, appropriate for cases where the directionality
of the effect is not known {\em a priori\/}:
\begin{condition}
  \label{struct2a}
  The effect of $A$ on $Y$ is either {\em positive\/}, in the sense
  that $P(Y =1 \mid W=w, \sigma_{AB}= ab)$ \color{black} is a
  non-decreasing function of $a$ for all $(b,w)$; or {\em negative\/},
  in the sense that $P(Y =1 \mid W=w, \sigma_{AB}= ab)$ is a
  non-increasing function of $a$ for all $(b,w)$; and similarly with
  the r\^oles of $A$ and $B$ interchanged.
\end{condition}

\begin{cor}
    Suppose that, in the statement of \theoremref{main} or
    \corref{nonindep}, we replace \condref{struct2} by the weaker
    \condref{struct2a}, and at the same time replace the
    superadditivity property \eqref{superadditivity} by the stronger
    excess risk property \eqref{excessrisk} (again reinterpreted in
    terms of definition \eqref{reinterpretation}).  Then the
    conclusion remains valid.
\end{cor}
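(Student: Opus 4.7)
The plan is to show that under the weakened \condref{struct2a} together with excess risk, the direction of each effect is in fact forced to be positive, so that \condref{struct2} is recovered. Since $R_{00c}\geq 0$, excess risk trivially implies superadditivity, and \theoremref{main} (respectively \corref{nonindep}) then applies in its original form, yielding $A*B\,[W]$.

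First I would localize the excess-risk hypothesis. By \condref{caus2} and the tower property, $S_{10c}+S_{01c}-S_{11c}=\E\{S_{10cU}+S_{01cU}-S_{11cU}\mid C=c\}>1$, so some $u^{*}$ satisfies $S^{*}_{10}+S^{*}_{01}-S^{*}_{11}>1$, with $S^{*}_{ij}:=S_{ijcu^{*}}$. Writing $R^{*}_{ij}:=1-S^{*}_{ij}$, this yields in particular $R^{*}_{11}>R^{*}_{01}$ and $R^{*}_{11}>R^{*}_{10}$.

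Next I would express these observational quantities in terms of $\pi^{*}(a,b):=\pi_{cu^{*}}(a,b)$. In the setting of \theoremref{main}, \condref{caus1} gives $R^{*}_{ij}=\E\{1-\pi^{*}(A,B)\mid\alpha=i,\beta=j,C=c,U=u^{*}\}$, and \condref{caus2} together with \condref{caus3} yields $A\indep B\mid(C,U)$, so $R^{*}_{ij}$ factors over the conditional distributions of $A\mid\alpha=i$ and $B\mid\beta=j$ (both at $C=c,U=u^{*}$). Since the $\alpha=1$ distribution of $A$ stochastically dominates the $\alpha=0$ one, monotonicity of $\pi^{*}$ in $a$ transfers to monotonicity of $R^{*}_{ij}$ in $i$: a positive effect of $A$ gives $R^{*}_{1j}\geq R^{*}_{0j}$, a negative effect gives $R^{*}_{1j}\leq R^{*}_{0j}$, and analogously for $B$. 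In the \corref{nonindep} setting, binariness of $A,B$ makes $R^{*}_{ij}=1-\pi^{*}(i,j)$ directly from \condref{caus1}, and the same dichotomy follows without any appeal to \condref{caus3}.

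Finally, the strict inequality $R^{*}_{11}>R^{*}_{01}$ is incompatible with the ``$A$ negative'' branch of \condref{struct2a}, so the effect of $A$ must be positive; symmetrically $R^{*}_{11}>R^{*}_{10}$ forces the effect of $B$ to be positive. This restores \condref{struct2} in full, and the hypotheses of \theoremref{main} (respectively \corref{nonindep}) are met with superadditivity (guaranteed by excess risk), delivering the conclusion $A*B\,[W]$. The main obstacle is the middle step: translating interventional monotonicity of $\pi^{*}$ into observational monotonicity of $R^{*}_{ij}$, which requires the conditional independence $A\indep B\mid(C,U)$ provided by \condref{caus3} in the general setting, or trivialized by binariness in the \corref{nonindep} setting.
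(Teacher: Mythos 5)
Your proof is correct, and it reaches the conclusion by a genuinely different mechanism in the one step that matters. Both arguments share the same skeleton: localize excess risk to some $u^*$ via \condref{caus2}, extract the strict inequalities $S^*_{11}<S^*_{10}$ and $S^*_{11}<S^*_{01}$, use these to eliminate the ``negative'' branches of \condref{struct2a}, and finish with the superadditivity contradiction of \theoremref{main}. The difference lies in how the sign of each effect is pinned down. The paper stays inside the reductio of \theoremref{main}: it reuses the product inequalities \eqref{ineq11} and \eqref{ineq10}, which were derived \emph{from the interval null hypothesis}, to get $S_A(S_B-S_{\overline B})<0$, hence $S_B<S_{\overline B}$, and then observes that a negative effect of $B$ would force $S_B\geq\pi_0\geq S_{\overline B}$; so positivity of the effects is established only under the reductio hypothesis, which is all the contradiction needs. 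You instead establish positivity unconditionally: by \condref{caus2} and \condref{caus3} the conditional law of $(A,B)$ given $(\alpha,\beta,C,U)$ is a product, the $\alpha=1$ marginal of $A$ stochastically dominates the $\alpha=0$ one, so a negative effect of $A$ would give $S^*_{1j}\geq S^*_{0j}$, contradicting $S^*_{11}<S^*_{01}$ (and symmetrically for $B$); \condref{struct2} is thereby restored outright and \theoremref{main} (or \corref{nonindep}) applies as a black box, since excess risk trivially implies superadditivity. Your route is a cleaner reduction and makes the logical role of \condref{struct2a} more transparent, at the cost of an extra stochastic-dominance observation; the paper's version is shorter on the page because it recycles inequalities it has already set up. One phrasing to tighten: it is the joint conditional distribution of $(A,B)$ that factors under \condref{caus3}, not the quantity $R^*_{ij}$ itself (the latter would amount to the point null hypothesis); your subsequent dominance argument uses only the former, so the proof stands as intended.
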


\begin{proof}
  We use the same notation as in the proof of \theoremref{main}.
  Arguing similarly to that proof, we deduce that there exists a value
  $u^*$ of $U$ such that $R^*_{11} - R^*_{10} - R^*_{01} > 0$.  This
  implies both
  \begin{align}
      \label{x10}
      S^*_{11} - S^*_{10} &< 0\\
      \intertext{and}   \label{x01}   S^*_{11} - S^*_{01} &< 0,
  \end{align}
  as well as \eqref{uu}.

  \newpar From \eqref{x10}, \eqref{ineq11} and \eqref{ineq10}, we
  deduce $S_A(S_B - S_{\overline B}) < 0$, whence $S_B < S_{\overline
    B}$.  But if the effect of $B$ were negative we would have $S_B
  \geq S_{\overline B}$.  Hence the effect of $B$ is
  positive. Similarly, using \eqref{x01}, we deduce that the effect of
  $A$ is positive.  The rest of the proof now follows as before.
\end{proof}

\noindent {\bf Comment}.  By allowing the dichotomization of $A$ to be
arbitrary, the above theorem fits the common situation where the
continuous factor is made available in a dichotomized form, without
the possibility of recovering the original continuous measurements.

\todo[inline]{I have written and added the following theorem. I am not
  sure whether it should be kept in the paper, or whether its
  implications in some way incorporated.}

\rosso

\begin{theorem}
  \label{aggiuntoDaCarlo}
  Assume that the conditions of Theorem~\ref{main} hold, with
  Conditions~\ref{caus1}--\ref{caus3} satisfied by some choice  $(C
  \equiv C^{*}, U \equiv U^{*})$ of variables $C$ and $U$.  Then the
  theorem conditions remain satisfied if we replace the choice $(C
  \equiv C^{*},  U \equiv U^{*})$ with  $(C \equiv C^{*}, U \equiv
  \emptyset)$.
\end{theorem}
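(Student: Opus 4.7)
The strategy is to verify, one at a time, that each hypothesis of \theoremref{main} continues to hold after the substitution $(C,U) \equiv (C^{*},U^{*}) \mapsto (C^{*},\emptyset)$; in particular the context variable becomes $W = C^{*}$. The engine behind every step is the original \condref{caus2}, $U^{*} \indep (A,B,\sigma_{AB}) \mid C^{*}$, which allows us to marginalise $U^{*}$ out without disturbing the joint behaviour of $(A,B,Y,\sigma_{AB})$ given $C^{*}$.

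First I would dispatch the easy items. \condref{struct1} does not involve $W$ at all; \condref{caus3} refers only to $(A,B,C,\sigma_{AB})$ and is therefore untouched; \condref{caus2} with $U$ empty is vacuous. The uniform positivity condition is also immediate, for since $\pi_{c^{*}u^{*}}(\tau_A,\tau_B) > 0$ for every $u^{*}$, integrating this strictly positive quantity against the conditional distribution of $U^{*} \mid C^{*} = c^{*}$ yields $\pi_{c^{*}}(\tau_A,\tau_B) > 0$.

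Next I would verify the monotonicity requirement \condref{struct2} under the new context. Writing
\[
  P(Y=1 \mid C^{*} = c^{*}, \sigma_{AB} = ab) \;=\; \E\{P(Y=1 \mid C^{*}, U^{*}, \sigma_{AB} = ab) \mid C^{*} = c^{*}\},
\]
where the simplification of the outer conditioning set from $(C^{*},\sigma_{AB})$ to $C^{*}$ alone is legitimate by the original \condref{caus2}, the integrand is non-decreasing in $a$ for each fixed $u^{*}$ by the original \condref{struct2}, and the integrating measure does not depend on $a$; hence the marginal inherits monotonicity. The same reasoning with the r\^oles of $A$ and $B$ interchanged handles the second half of \condref{struct2}.

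Finally, for \condref{caus1} in the new form, $Y \indep \sigma_{AB} \mid (A,B,C^{*})$, I would argue purely from the conditional independence calculus: the original \condref{caus2} implies, by decomposition and weak union, $U^{*} \indep \sigma_{AB} \mid (A,B,C^{*})$; combining this with the original \condref{caus1}, $Y \indep \sigma_{AB} \mid (A,B,C^{*},U^{*})$, via the contraction axiom yields $(Y,U^{*}) \indep \sigma_{AB} \mid (A,B,C^{*})$, whence decomposition delivers the desired statement. This is the only step that is not essentially cosmetic, because conditional independence is not preserved under marginalisation in general; it is precisely the extra strength of \condref{caus2}---asserting independence of $U^{*}$ from $\sigma_{AB}$ itself, not merely from $(A,B)$---that makes this marginalisation go through, and so this is the step I would expect a referee to scrutinise most closely.
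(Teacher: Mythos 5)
Your proposal is correct and its key step---deriving $Y \indep \sigma_{AB} \mid (A,B,C^{*})$ from Conditions~\ref{caus1} and \ref{caus2} via weak union, decomposition and contraction---is exactly the argument the paper gives. The only difference is that you also explicitly verify that \condref{struct2} and the uniform positivity condition survive marginalisation of $U^{*}$, which the paper's proof leaves implicit (it addresses only Conditions~\ref{caus1}--\ref{caus3}); those extra checks are sound and arguably make the argument more complete.
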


\begin{proof}
  We shall use the following axiomatic properties \citep{Dawid1979} of
  the conditional independence relationship:
  \begin{quote}
    \begin{description}
    \item[{\rm Decomposition:}] $X \indep Y \,\mid\, Z$ and $W=f(Y)
      \,\, \rightarrow \,\, X \indep W \, \mid \, Z$
    \item[{\rm Weak union:}] $X \indep Y \, \mid \, Z$ and $W =
      f(Y)\,\, \rightarrow \,\, X \indep Y \, \mid \, (W,Z)$
    \item[{\rm Contraction:}]  $X \indep Y \mid Z$ and $ X \indep W
       \mid  (Y,Z) \, \rightarrow$ $ X \indep (Y,W) \mid Z$.
    \end{description}
  \end{quote}
  \noindent where we write $W = f(Y)$ to mean that $W$ is a function
  of $Y$.  By weak union, causal \condref{caus2} implies $U^{*} \indep
  (A,B,\sigma_{AB}) \,\mid \, (A,B,C^{*})$ which, in turn, implies
  \begin{equation}
    \label{implicata1}
    U^{*} \indep \sigma_{AB} \,\mid \, (A,B,C^{*})
  \end{equation}
  \noindent by decomposition.  By contraction, property
  (\ref{implicata1}) and causal \condref{caus1} jointly imply
  $(Y,U^{*}) \indep \sigma_{AB} \,\mid \, (A,B,C^{*})$ which, by
  decomposition, implies
  \begin{equation}
    \label{implicata3}
    Y \indep \sigma_{AB} \,\mid \, (A,B,C^{*}).
  \end{equation}
  \noindent Replacing the choice $(C \equiv C^{*}, U \equiv U^{*})$
  with $(C \equiv C^{*}, U \equiv \emptyset)$ leads to \condref{caus1}
  taking the form \eqref{implicata3}--- which we have just shown to
  hold, while it makes \condref{caus2} vacuous and leaves
  \condref{caus3} unaffected.  This completes the proof.
\end{proof}

\newpar It follows from Theorem~\ref{aggiuntoDaCarlo} that, if we wish
to check causal conditions~\ref{caus1} and \ref{caus2} for some choice
of context $W = (C,U)$, where $C$ is observed, we could first check
condition~\ref{caus2} for the simpler case of context $W=C$.  If it
does not hold in this case, we know the conditions can not hold for
any choice of $U$.


\section{Direct effects interaction}
\label{Direct effects interaction}

This section of the paper examines relationships between mechanistic
interaction and mediation. Mediation analysis hinges on the concept of
{\em direct effect\/} of a variable $X$ on $Y$. One variant of this
concept, the {\em direct effect of $X$ on $Y$ controlling for $F$\/},
is meant to quantify the sensitivity of $Y$ to changes in $X$ when $F$
is held fixed by intervention, that is, when a (perhaps hypothetical)
physical intervention changes the value of $X$ from some reference
value $x$ to some value $x^{'}$, while $F$ is set to some constant
\citep{pearldir,Robins1992}.  \newpar We connect our theory to the
theory of mediation by defining the concept of mechanistic interaction
between $A$ and $B$ when a further variable $F$, which could itself be
affected by $A$ and/or $B$, is set by intervention to a constant.  Let
$Z := W \setminus F$ denote the unmanipulated context variables, and
extend the notation \eqref{pi} by writing
\begin{eqnarray}
  \label{piectended}
  \pi_{z}^{f}(a,b) &:= P(Y=0 \mid  Z=z, \sigma_{AB}= ab, \sigma_F=f),
\end{eqnarray}
\noindent for the probability of $Y = 0$ given $Z=z$, conditional on
$A$, $B$ and $F$ being manipulated to take on values, $a$, $b$ and
$f$, respectively.  Take the the direct effect of $A$ on $Y$
controlling for $(B=b,F=f)$, in context $Z=z$, to be measured in terms
of relative probability of negative outcome by the quantity
\begin{eqnarray}
  \label{causalrsp}
  \frac{\pi^f_{z}(a,b)} {\pi^f_{z}(a',b)}.
\end{eqnarray}
\noindent We say there is {\em no direct mechanistic interaction\/} if
the act of setting $B$ to a higher value can never enhance the direct
effect of $A$, as measured by \eqref{causalrsp} with $a > a'$.  This
leads to the following generalization of our previous \defref{mechint}
of mechanistic interaction:

\begin{definition}
  \label{def:mechintgeneral}

  We say that the causal factors $A$ and $B$ interact mechanistically
  to produce \rosso $Y=1$ \nero under $F$-intervention if there exist
  values $(f,a > a',b > b',z)$ for $(F,A,B,Z)$, respectively, such
  that

\begin{equation}
    \label{generalized interaction}
    \frac{\pi^f_{z}(a,b)}{\pi^f_{z}(a',b)}
    \,  <  \, \frac{\pi^f_{z}(a,b')}{\pi^f_{z}(a',b')}.
\end{equation}
\noindent In this case we write $A * B \,\mid\, F \, [Z]$, or $A * B\,
\mid\, F =f \, [Z]$ if interest focuses a specific value $f$ imposed
on $F$.  We alternatively describe condition \eqref{generalized
  interaction} as mechanistic interaction between the direct effects
of $A$ and $B$ on $Y$, controlling for $F$, in context $Z=z$.
\end{definition}

\newpar The following theorem holds.

\begin{theorem}
  \label{thm:direct}
  Suppose
  \begin{condition}
      \label{struct7}   $Y \; \indep \; \sigma_{F} \mid
    (A,B,F,Z,\sigma_{AB})$.
  \end{condition}
  (That is, conditional on $(A,B,F,Z)$, the dependence of $Y$ on $F$
  is not further affected by the way the value of $F$ has been
  generated, be it by mere observation or by intervention.)

  \newpar Then $A * B\, \mid \, F\, [Z]$ if and only if $A * B \,
  [F,Z]$.
\end{theorem}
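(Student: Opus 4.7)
Both definitions ($A*B\mid F\,[Z]$ and $A*B\,[F,Z]$) are stated as strict inequalities among values of a probability-of-negative-outcome function evaluated at various argument tuples. The former uses $\pi^f_z(a,b) = P(Y=0\mid Z=z,\sigma_{AB}=ab,\sigma_F=f)$, while the latter, obtained by applying Definition~\ref{def:mechint} with context variable $W=(F,Z)$, uses $\pi_{(f,z)}(a,b) := P(Y=0\mid F=f,Z=z,\sigma_{AB}=ab)$, understood in the observational regime for $F$ (i.e.\ $\sigma_F=\emptyset$). Since the structural form of the two defining inequalities is identical, it suffices to prove the pointwise identity $\pi^f_z(a,b)=\pi_{(f,z)}(a,b)$ for every $(a,b,f,z)$ in the relevant ranges; both implications of the biconditional then follow at once.

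The plan is therefore reduced to a single conditional-independence calculation. First, I would note that under the interventional regime $\sigma_{AB}=ab$, the values of $A$ and $B$ are forced, so with probability one $A=a$ and $B=b$, which means we may freely add these as conditioning events without altering the conditional probability. Likewise, under $\sigma_F=f$ the value $F=f$ holds almost surely, and can be added to the conditioning event in $\pi^f_z(a,b)$. After these cosmetic steps, the two quantities become
\begin{align*}
\pi^f_z(a,b) &= P(Y=0\mid A=a,B=b,F=f,Z=z,\sigma_{AB}=ab,\sigma_F=f),\\
\pi_{(f,z)}(a,b) &= P(Y=0\mid A=a,B=b,F=f,Z=z,\sigma_{AB}=ab,\sigma_F=\emptyset),
\end{align*}
which differ only in the value of $\sigma_F$ in the conditioning set.

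The final step is a direct application of Condition~\ref{struct7}, which asserts $Y\indep\sigma_F\mid(A,B,F,Z,\sigma_{AB})$: this precisely states that the conditional law of $Y$ given $(A,B,F,Z,\sigma_{AB})$ is functionally the same across all values of $\sigma_F$, hence in particular the values $f$ and $\emptyset$ yield the same conditional probability of $Y=0$. This equates the two displayed expressions and delivers $\pi^f_z(a,b)=\pi_{(f,z)}(a,b)$, from which the biconditional follows immediately. The only real subtlety, and arguably the main (small) obstacle, is bookkeeping: one must be careful that the ``observational'' interpretation of $\pi_{(f,z)}(a,b)$ in Definition~\ref{def:mechint} corresponds to $\sigma_F=\emptyset$, and that the interventional regime for $AB$ (already present on both sides) is carried through the application of Condition~\ref{struct7} without change. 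No further probabilistic machinery or axioms of conditional independence are needed.
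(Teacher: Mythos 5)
Your proposal is correct and follows essentially the same route as the paper, whose entire proof is the one-line identity $\pi^f_z(a,b)=\pi_{fz}(a,b)$; you have simply supplied the bookkeeping (adding the almost-sure conditioning events $A=a$, $B=b$, $F=f$ and then invoking Condition~\ref{struct7} to swap $\sigma_F=f$ for $\sigma_F=\emptyset$) that the authors leave implicit.
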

\begin{proof}
  In this case $\pi^f_z(a,b) = \pi_{fz}(a,b)$.
\end{proof}

\newpar Thus under \condref{struct7} we can use theorem \ref{main} and
its Corollaries to investigate mechanistic interaction between direct
effects.

\section{Examples}
\label{Examples}

\todo[inline]{Examples section entirely rewritten}

\newpar We now illustrate our framework with the aid of the following
examples.

\begin{ex}
  \label{ex:Example1} Brader and colleagues \citep{brader2008} study
  the reaction of public opinion to media stories about immigration.
  White adult participants were invited to read a mock newspaper story
  illustrating the costs of immigration.  The story was the same
      for all individuals, except for the stated ethnicity (latino
  {\em vs.\/}\ white) of the immigrants.  Brader (see also
  \citep{3589}) found that whites are more likely to oppose
  immigration when the story involves latino (rather than white)
  immigrants.

  \newpar We take Brader's study to be described by the ID of
  \figref{fig1}, with $B$ representing the participant's age, $A$
  indicating whether the participant was randomized to a ``latino'' or
  to a ``white'' story, $M$ 
  representing the participant's level of ``anxiety about
  immigration'', as measured through a questionnaire administered at
  the end of the reading, $G$ representing observed socio-educational
  variables and $Y=1$ indicating a positive answer to the question:
  ``Do you agree about sending the Congress a letter of complaint
  about immigration?''. The ID acknowledges the influence of
  socio-educational variables on both anxiety and outcome.

  \newpar The question whether the total effects of age and ethnical
  story framing interfere with each other can be addressed on the
  basis of Brader's data under the assumptions of \figref{fig1}, and
  assuming that the structural conditions and the uniform positivity
  condition are also valid. Suffices to show that with the choice $W
  \equiv C \equiv U \equiv \emptyset$, the causal conditions for the
  identifiability of $A*B$ hold.  In fact, with that choice,
  \condref{caus2} becomes vacuous and the remaining causal conditions
  for the identifiability of $A*B$ take forms $Y \indep \sigma_{AB}
  \mid (A,B)$ and $A \indep B \mid \sigma_{AB}$, both of which hold in
  \figref{fig1}.  We conclude that the superadditivity condition
  $R_{11}-R_{01}-R_{10}+R_{00}>0$ (or its excess risk equivalent if
  appropriate) is a valid basis for a population-wide test of $A*B$ in
  Brader's study.
\end{ex}

\begin{ex}
  \label{ex:Example2} We shall now continue our analysis of Brader's
  study. We shall continue to take \figref{fig1} as a valid
  representation of the problem, and to assume the validity of the
  structural and uniform positivity conditions.  Under these
  assumptions, we shall now show that the question whether $A$ and $B$
  interact within a specific socio-epidemiological stratum \rosso (any
  problem with the fact this latter is a post-treatment variable)
  \nero can be addressed on the basis of the study data.  Suffices to
  show that with the choices $W \equiv C \equiv G$ and $U \equiv
  \emptyset$, the causal conditions for the identifiability of $A*B\,
  [G]$ hold.  In fact, with that choice, \condref{caus2} becomes
  vacuous and the remaining causal conditions for the identifiability
  take forms $Y \indep \sigma_{AB} \mid (A,B,G)$ and $A \indep B \mid
  (G,\sigma_{AB})$, both of which hold in \figref{fig1}.  We conclude
  that the presence of superadditivity in a $G=g$ stratum of the
  population,
  \begin{equation}
    R_{11g}-R_{01g}-R_{10g}+R_{00g}>0
    \label{esempio2}
  \end{equation}
  \noindent in Brader's study implies $A*B\, [G]$.  Furthermore,
  \condref{struct7} for the equivalence of $A*B\, [G]$ and $A*B \,
  \mid \, G$ takes the form $Y \indep \sigma_G \, \mid \,
  (A,B,G,\sigma_{AB})$, which holds in \figref{fig1}.  It follows
  that, in Brader's study, evidence in favour of \eqref{esempio2} (or
  of the corresponding excess risk condition, if appropriate) will
  support the hypothesis of an interaction between the direct effects
  of $A$ and $B$ on $Y$ unmediated by $G$.

\end{ex}

\begin{ex}
  \label{ex:Example 3} Consider the class of problems described by
  \figref{fig1}, and take all variables to be observed, with $A$, $B$
  and $Y$ binary. Consider the question whether the effects of $A$ and
  $B$ interact mechanistically within a stratum $M=m, G=g$ of the
  population.  This question can be addressed on the basis of
  the data in this example.  To see this consider that, with the
  choices $W \equiv (G,M) \equiv C$ and $U \equiv \emptyset$, the
  causal conditions for the identifiability of $A*B \, [M,G]$, are
  satisfied. This is because, with those choices, \condref{caus2}
  becomes vacuous, \condref{caus3} does not apply because $A$ and $B$
  are binary, and Condition \ref{caus1} takes the form $Y \indep
  \sigma_{AB} \, \mid \, (A,B,G,M)$, which follows from \figref{fig1}.
  We conclude that, whenever the remaining (structural and uniform
  positivity) conditions for identifiability hold, the presence of
  superadditivity in a $(M=m,G=g)$ stratum of the population,
  \begin{equation}
    R_{11mg}-R_{01mg}-R_{10mg}+R_{00mg}>0,
    \label{esempio3}
  \end{equation}
  \noindent implies $A*B \, [M,G]$ in this example.

  \newpar Furthermore, \condref{struct7} for the equivalence of $A*B
  \, [M,G]$ and $A*B \, \mid \, (M,G)$ takes the form $Y \indep
  (\sigma_M,\sigma_G) \, \mid \, (A,B,G,M,\sigma_{AB})$, which follows
  from \figref{fig1}.  Hence, in this example, evidence of
  superadditivity in the sense of \eqref{esempio3} (or of excess risk,
  if appropriate) will corroborate the hypothesis of a direct effects
  interaction between the effects of $A$ and $B$ on $Y$, unmediated by
  $(M,G)$: $A*B \, \mid \, (M,G)$.  However in the more general case
  where $A$ or $B$ are non-binary, it will not be possible to ignore
  causal condition \ref{caus3}, which is in fact violated in this
  example.
\end{ex}

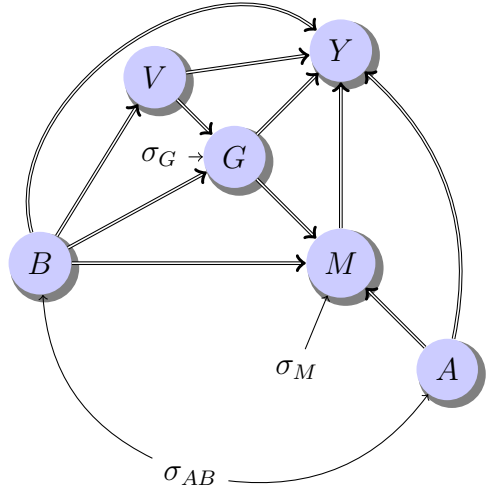
\begin{figure}[bt]
  \centering
  \begin{tikzpicture}[align=center,node distance=2cm]
      \node (Y) [circular drop shadow,decorate, fill=blue!20,circle]
      {$Y$};   \node (Z) [circular drop shadow,below left
    of=Y,decorate,   fill=blue!20,circle] {$G$};   \node (M) [circular
    drop shadow,below right of=Z,   decorate, fill=blue!20,circle]
    {$M$};   \node (A)[circular drop shadow,below right of=M,decorate,
      fill=blue!20,circle] {$A$};   \node (B) [circular drop
    shadow,left of=M,    node distance=4cm,   decorate,
    fill=blue!20,circle] {$B$};   \node (SM) [left of=A]
    {$\sigma_{M}$};    \node (SG) [left of=Z,node distance=1cm]
    {$\sigma_{G}$};   \node (SAB) [below left of=SM] {$\sigma_{AB}$};
      \node (V) [circular drop shadow,above left of=Z,decorate,
       node distance=1.5cm,   fill=blue!20,circle] {$V$};
      \draw[->,double] (M) to node {} (Y);   \draw[->,double] (B) to
    node {} (Z);   \draw[->,double] (Z) to node {} (Y);
      \draw[->,double] (V) to node {} (Z);   \draw[->,double] (V) to
    node {} (Y);   \draw[->,double] (Z) to node {} (M);
      \draw[->,double] (A) to node {} (M);   \draw[->,double, bend
    right,   in=-150] (A) to node {} (Y); \draw[->,double] (B) to node
    {} (V);   \draw[->,double, bend left,out=70,in=105] (B) to node {}
    (Y);   \draw[->,double] (B) to node {} (M);   \draw[->] (SM) to
    node {} (M);    \draw[->] (SG) to node {} (Z);   \draw[->, bend
    right] (SAB) to node {} (A);   \draw[->, bend left] (SAB) to node
    {} (B);
  \end{tikzpicture}
  \caption{\small Influence diagram for Example \ref{ex:Example4}.}
  \label{Vansteelandt}
\end{figure}

\begin{ex}
  \label{ex:Example4} Consider the class of problems described by the
  influence diagram of \figref{Vansteelandt}.  Take the variable $V$,
  which is a putative common direct influence on $G$ and $Y$, but not
  a direct influence on $M$, to be unobserved.  Let all other
  variables in the diagram be observed, with $A, B$ and $Y$ binary.
   \newpar Consider the question whether the total effects of $A$ and
  $B$ interact mechanistically. This question can be addressed on the
  basis of the data in this example. To see this, consider that, with
  the choice $W \equiv \emptyset$, the causal conditions for the
  identifiability of $A*B$, are satisfied. This is because with that
  choice \condref{caus2} becomes vacuous, \condref{caus3} does not
  apply since $A$ and $B$ are binary, and \condref{caus1} in this case
  takes the form $Y \indep \sigma_{AB} \mid (A,B)$, which follows from
  the graph in \figref{Vansteelandt}. We conclude that, whenever the
  remaining (structural and uniform positivity) conditions for
  identifiability hold, the presence of superadditivity in the sense
  of
  \begin{equation*}
    R_{11}-R_{01}-R_{10}+R_{00}>0,
  \end{equation*}
  \noindent or of excess risk if appropriate, implies $A*B$ in this
  example.  \newpar Now consider the question whether the effects of
  $A$ and $B$ interact mechanistically in a specific stratum $M=m,
  G=g$ of the population.  This question can be addressed on the basis
  of the data in this example. To see this, consider that, with the
  choices $W \equiv C \equiv (M,G)$ and $U \equiv \emptyset$, the
  causal conditions for the identifiability of $A*B\, [M,G]$, are
  satisfied. This is because, with those choices, \condref{caus2}
  becomes vacuous and \condref{caus3} can be dropped on the grounds
  that $A$ and $B$ are binary, while \condref{caus1} takes the form $Y
  \indep \sigma_{AB} \mid (M,G,A,B)$, which holds in the graph.  We
  conclude that, whenever the remaining (structural and uniform
  positivity) conditions for identifiability hold, the presence of
  superadditivity in the sense of
  \begin{equation}
    R_{11mg}-R_{01mg}-R_{10mg}+R_{00mg}>0,
    \label{example4}
  \end{equation}
  \noindent or of excess risk if appropriate, implies $A*B\, [M,G]$ in
  this example.

  \newpar Next consider the question whether the direct effects of $A$
  and $B$ on $Y$, unmediated by $M$, interact mechanistically in a
  stratum $G=g$ of the population.  This question can be addressed on
  the basis of the data in this example.  This is because
  \condref{struct7} for the equivalence of  the interaction $A*B\,
  [M,G]$ (which we have proved testable in this example) and $A*B
  \,\mid\, M [G]$  takes the form $Y \indep \sigma_{M} \mid
  (A,B,M,G,\sigma_{AB})$, which follows from the graph in
  \figref{Vansteelandt}. We conclude that a test of the
  superadditivity condition \eqref{example4} will  test the hypothesis
  that the direct effects of $A$ and $B$ on $Y$, unmediated by $M$,
  interact mechanistically in a stratum $G=g$ of the population. The
   result of the test will, in general, depend on the  chosen values
  for $m$ and $g$.   
\end{ex}

  \begin{ex}
    \label{ex:Example5} Consider the class of problems described by
    the influence diagram of \figref{fig3}, and take all the variables
    in this diagram to be observed. In this example, the $\sigma_{AB}
    \rightarrow G$ arrow indicates that the  probability distribution
    of $G$ may depend on whether the values of $A$ and $B$ are
    generated observationally or interventionally.

    \newpar The question whether the total effects of $A$ and $B$
    interact mechanistically, in the sense of $A*B$, {cannot} be
    addressed on the basis of the data in this example, the reason
    being that \condref{caus1} for this interaction to be identifiable
    takes the form $Y \indep \sigma_{AB} \mid (A,B)$, which does not
    hold in this case.  The culprit here is the $\sigma_{AB}
    \rightarrow G$ arrow.

    \newpar But consider the question whether the effects of $A$ and
    $B$ interact in a specific stratum $G=g$ of the population.  This
    question can be addressed on the basis of the data in this
    example. To see this, consider that, with the choices $W \equiv G
    \equiv C$ and $U \equiv \emptyset$ the causal conditions for the
    identifiability of $A*B\, [G]$, are satisfied. This is because,
    with those choices, \condref{caus2} becomes vacuous and the
    remaining causal conditions take the forms $Y \indep \sigma_{AB}
    \mid (G,A,B)$ and $A \indep B \mid (G, \sigma_{AB})$, both of
    which hold in the graph.  We conclude that, whenever the remaining
    (structural and uniform positivity) conditions for identifiability
    hold, the presence of superadditivity in the sense of
    \begin{equation}
      R_{11g}-R_{01g}-R_{10g}+R_{00g}>0,
      \label{example5}
    \end{equation}
    \noindent or of excess risk if appropriate, implies $A*B\, [G]$ in
    this example.

    \newpar Next consider the question whether the direct effects of
    $A$ and $B$ on $Y$, unmediated by $G$, interact mechanistically.
    This question can be addressed on the basis of the data in this
    example.  This is because \condref{struct7} for the equivalence of
     the interaction $A*B\, [G]$ (which we have proved testable in
    this example) and $A*B \mid G$  takes the form $Y \indep
    \sigma_{G} \mid (A,B,G,\sigma_{AB})$, which follows from the graph
    in \figref{Vansteelandt}.  We conclude that a test of the
    superadditivity condition \eqref{example5} will test the
    hypothesis that the direct effects of $A$ and $B$ on $Y$,
    unmediated by $G$, interact mechanistically.
  \end{ex}

  \todo[inline]{We may wish to combine the diagrams of the last two
    figures into a single figure.}

\begin{figure}[bt]
  \centering
  \begin{tikzpicture}[align=center,node distance=2cm]
      \node (Y) [circular drop shadow,decorate, fill=blue!20,circle]
      {$Y$};   \node (Z) [circular drop shadow,below left
    of=Y,decorate,   fill=blue!20,circle] {$G$};   \node (M) [circular
    drop shadow,below right of=Z,   decorate, fill=blue!20,circle]
    {$M$};   \node (A)[circular drop shadow,below right of=M,decorate,
      fill=blue!20,circle] {$A$};   \node (B) [circular drop
    shadow,below left of=Z,   decorate, fill=blue!20,circle] {$B$};
      \node (SM) [left of=A] {$\sigma_{M}$};   \node (SG) [left of=Z,
    node distance=1.5cm] {$\sigma_{G}$};   \node (SAB) [left of=SM]
    {$\sigma_{AB}$};
      \draw[->,double, bend right] (M) to node {} (Y);
      \draw[->,double](Z) to node {} (Y);
      \draw[->,double] (Z) to node {} (M);   \draw[->,double] (A) to
    node {} (M);   \draw[->,double, bend right,   in=-140] (A) to node
    {} (Y);   \draw[->,double, bend left, out=90,   in=90] (B) to node
    {} (Y);   \draw[->,double] (B) to node {} (M); \draw[->,double]
    (B) to node {} (Z);   \draw[->] (SM) to node {} (M);  \draw[->]
    (SG) to node {} (Z);   \draw[->, bend right] (SAB) to node {} (A);
       \draw[->, bend right] (SAB) to node {} (Z);   \draw[->, bend
    left] (SAB) to node {} (B);
  \end{tikzpicture}
  \caption{\small Causal diagram for Example \ref{ex:Example5}}
  \label{fig3}
\end{figure}
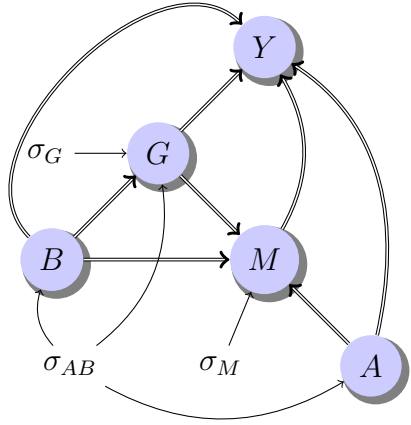

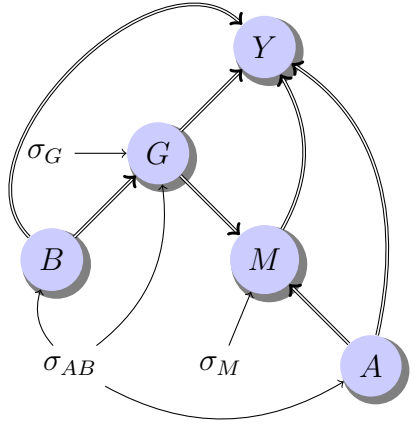
\begin{figure}[bt]
  \centering
  \begin{tikzpicture}[align=center,node distance=2cm]
      \node (Y) [circular drop shadow,decorate, fill=blue!20,circle]
      {$Y$};   \node (Z) [circular drop shadow,below left
    of=Y,decorate,   fill=blue!20,circle] {$G$};   \node (M) [circular
    drop shadow,below right of=Z,   decorate, fill=blue!20,circle]
    {$M$};   \node (A)[circular drop shadow,below right of=M,decorate,
      fill=blue!20,circle] {$A$};   \node (B) [circular drop
    shadow,below left of=Z,   decorate, fill=blue!20,circle] {$B$};
      \node (SM) [left of=A] {$\sigma_{M}$};   \node (SG) [left of=Z,
    node distance=1.5cm] {$\sigma_{G}$};   \node (SAB) [left of=SM]
    {$\sigma_{AB}$};
      \draw[->,double, bend right] (M) to node {} (Y);
      \draw[->,double](Z) to node {} (Y);
      \draw[->,double] (Z) to node {} (M);   \draw[->,double] (A) to
    node {} (M);   \draw[->,double, bend right,   in=-140] (A) to node
    {} (Y);   \draw[->,double, bend left, out=90,   in=90] (B) to node
    {} (Y);
    \draw[->,double] (B) to node {} (Z);   \draw[->] (SM) to node {}
    (M);  \draw[->] (SG) to node {} (Z);   \draw[->, bend right] (SAB)
    to node {} (A);    \draw[->, bend right] (SAB) to node {} (Z);
      \draw[->, bend left] (SAB) to node {} (B);
  \end{tikzpicture}
  \caption{\small Causal diagram for Example \ref{ex:Example6}}
  \label{fig4}
\end{figure}

\begin{ex}
  \label{ex:Example6} Finally consider the class of problems described
  by the influence diagram of \figref{fig4}, and take all the
  variables in this diagram to be observed.

  \newpar Consider the question whether the effects of $A$ and $B$
  interact in a specific stratum $M=m,G=g$ of the population.  This
  question can be addressed on the basis of the data in this
  example. To see this, consider that, with the choices $W \equiv C
  \equiv (M,G)$ and $U \equiv \emptyset$,  the causal conditions for
  the identifiability of $A*B [M,G]$, are satisfied. This is because,
  with those choices, \condref{caus2} becomes vacuous and the
  remaining causal conditions take the forms $Y \indep \sigma_{AB}
  \mid (A,B,M,G)$ and $A \indep B \mid (M,G, \sigma_{AB})$, both of
  which hold in the graph of \figref{fig4}.  We conclude that,
  whenever the remaining (structural and uniform positivity)
  conditions for identifiability hold, the presence of superadditivity
  in the sense of
  \begin{equation*}
    R_{11mg}-R_{01mg}-R_{10mg}+R_{00mg}>0,
  \end{equation*}
  \noindent or of excess risk if appropriate, implies $A*B [M,G]$ in
  this example.

  \newpar Next consider the question whether the direct effects of $A$
  and $B$ on $Y$, unmediated by $M$ or $G$, interact mechanistically.
  This question can be addressed on the basis of the data in this
  example.  This is because \condref{struct7} for the equivalence of
   the interaction $A*B [M,G]$ (which we have proved testable in this
  example) and $A*B \mid (M,G)$  takes the form $Y \indep
  (\sigma_G,\sigma_M) \, \mid \, (A,B,G,M,\sigma_{AB})$,  which
  follows from the graph in \figref{fig4}.
\end{ex}

\section{Causality and agency}
\label{Philosophical considerations}

Some of the above examples raise some issues of the interpretation of
``causality'' in our approach. According to our description so far,
that concept has been closely tied to the possibility of making
external interventions to set values for the ``causal variables'' $A$
and $B$. This conception is in line with philosophical ``agency''
theories of causality, \cite{price:bjps91,hausman:book,woodward:book},
which regard causes as handles for manipulating effects. However, such
an anthropocentric manipulationist view is unnecessarily restrictive,
and can hamper application of causal inference to numerous scientific
disciplines that demand a more general notion of cause, not tied
simply to what human agents can do.

\newpar In the first two examples above, while the variable $A =$
``stated ethnicity'' was manipulable (and was manipulated), we can not
reasonably regard the variable $B=$ ``age of participant'' as
manipulable. We might however conceive of being able to observe an
individual at various points of her life, and be interested in the way
in which her age then might make a difference to her psychological
response to certain media framing techniques. Psychologists have
knowledge, theories and hypotheses about the role of age in the
response process. They can, for example, make informed guesses about -
and explain on the basis of psychological theories - the different
outcome we might have observed had the individual been younger or
older than he is ({\em e.g.\/}, ``young people tend to react with less
anxiety''). Specific psychological mechanisms and reactions are
associated with young age. We should not give up looking into them
simply because the age variable falls outside the standard
manipulability theory of causation.

\newpar As another example, in epidemiology it is often appropriate to
consider, as a cause of a disease, a variable such as genotype, whose
manipulation by human beings is not practically possible; and
application of mechanistic interaction tests to investigations of
epistasis or pharmacogenomics will require a broader conception of
``intervention'' than the agency approach typically supplies.  Recent
discussions of the topic \citep{woodward2013} have loosened the strict
confines of the manipulationist theory, regarding as an
``intervention'' any appropriate (in a sense that has to be made
clear) exogenous causal process, without any necessary connection with
human action.


\section{Related work}
\label{Related work}

A recent paper by \cite{VanderWeeleStochastic} (hereafter VR) tackles
mechanistic interaction via stochastic (rather than deterministic)
potential outcomes (POs).  In the standard PO formulation, the value
that $Y$ would take in individual $\omega$ in response to an
intervention that sets $(A,B)$ to values $(a,b)$ is regarded as a
potential outcome, $Y_{ab}(\omega)$.  Potential outcomes are fixed for
each particular individual even before the treatment is applied, and
unaffected by the particular regime in which the values of $A$ and $B$
are set.  VR relax this by allowing each individual $\omega$ to be
characterized by a {\em stochastic\/} potential outcome,
$Y_{ab}(\omega)$, that varies in the individual according to a
Bernoulli distribution with the expected value fixed by the
intervention and by random circumstances, although these latter are
assumed {not} to be affected by the treatment.  Because of the latter
constraint, it is not clear whether VR's approach, as currently
formulated, copes with situations where a stochastic mediator of the
effect of $(A,B)$ on $Y$ introduces intervention-dependent random`.
variation\footnote{We also note that in Rubin's standard PO
  formulation there is a value of the response for each individual and
  possible intervention, and such value is constant across all
  possible regimes, in the sense that it is not affected by the way
  the values of $A$ and $B$ are generated.  In VR's approach, the
  response has its expected value fixed by the particular individual,
  set of random circumstances and intervention.  But conditional on
  this expected value, is the realized value of the response assumed
  to vary across regimes?  In other words, is the observationally
  detected response identical to what I would have observed had I
  fixed the same treatment by intervention? And, if the answer to the
  above question is negative, are the regime-specific versions of the
  response assumed independent? We feel that the question matters to
  the very purpose of carrying inferences from the observational to
  other regimes.  These considerations are related to certain
  ambiguities of counterfactual-based formulations of causality
  \citep{dawidjasa}}.

\newpar \cite{Ramsahai2013} gives a fully probabilistic account of
mechanistic interaction, which boasts aspects of greater generality
relative to ours, including freedom from monotonicity assumptions
about the effects of $A$ and $B$. There are also aspects of lesser
generality: no attempt is made in Ramsahai' paper to examine the
implications of the presence of continuous causal factors. It is
therefore appropriate to proceed by comparing Ramsahai's method and
ours in the special case where $A$ and $B$ are binary variables, with
$(a,a',b,b') \equiv (1,0,1,0)$. In this special case, our condition
for presence of mechanistic interaction, as expressed by
(\ref{synergism 5}), specializes to \begin{equation}
  \label{synergism Ram}
  \frac{\pi_w(1,1)}{\pi_{w}(0,1)} <
  \frac{\pi_w(1,0)}{\pi_{w}(0,0)}.
\end{equation}
\noindent As seen in \S \ref{mechint}, under Condition~\ref{struct2},
Equation~(\ref{synergism Ram}) implies strictly positive monotonicity
of the effects of $A$ and $B$ upon $Y$, as expressed by the
inequalities
\begin{eqnarray}
  \label{synergism 1}
  \pi_w(1,1) &< \pi_w(1,0),\\
  \label{synergism 2}
  \pi_w(1,1) &< \pi_w(0,1).
\end{eqnarray}
\noindent In our approach, these inequalities are consistent with, but
not sufficient for, the presence of mechanistic interaction.  In fact,
consistently with our concluding remarks of \S~\ref{secpoint},
inequalities \eqref{synergism 1}-- \eqref{synergism 2} do not imply
(\ref{synergism Ram}).  By contrast, in Ramsahai's approach, those
inequalities are taken to define mechanistic interaction for binary
variables.  Hence Ramsahai's definition of mechanistic interaction is
weaker than ours.  The more exacting nature of our definition of
mechanistic interaction, combined with allowance for continuity,
explains the stronger assumptions required in our approach compared to
those of Ramsahai.

\newpar To elucidate the differences between the approaches, suppose
that, in the bowling example, $A$ (the first player's ball size) takes
value 0 (the player has no ball to throw) or 1 (the player throws a
ball).  Interpret $B$ analogously.  It then seems reasonable, on the
basis of physics and common sense, to assume that inequalities
\eqref{synergism 1}--\eqref{synergism 2} hold in this example.  In
Ramsahai's formulation, this is sufficient to conclude in favour of
mechanistic interaction between the effects of the throws of the two
players, even before looking into the data, and even if the two
players act independently.  This appears to clash with our
psychological notion of synergism.  By contrast, in our formulation,
\rosso in order to conclude in favour of mechanistic interaction,
conditions \eqref{synergism 1}--\eqref{synergism 2} are not sufficient
because they do not contradict the intuitive idea of independent
throws expressed by \eqref{pointnull}.  \nero

\newpar To conclude, we note that our approach uses statistics (excess
risk and superadditivity) which are often testable at negligible
computational cost in prospective studies, and (approximately and
under assumptions) also in retrospective studies.  By contrast,
attention needs to be paid to the computational feasibility of
Ramsahai's approach.

\section{Discussion}
\label{Discussion}

Mechanistic interaction has often been tackled within a potential
outcome framework \citep{Rubin1974} or within an equivalent
formulation of causality.  We have discussed possible limitations of
this approach.  We have also discussed limitations of current
approaches to mechanistic interaction which reject the potential
outcome formulation in favour of the standard probability formalism.
Motivated by the limitations of the previous approaches, we have
proposed a novel definition of the causal notion of mechanistic
interaction, and presented sufficient conditions for its
identification from observational data.  Because these conditions are
expressed in terms of conditional independence, they hold irrespective
of particular parametric or distributional assumptions about the
problem variables.  A further advantage of our conditional
independence formulation of the identifiability conditions is that
these can be straightforwardly checked on a causal diagram of the
problem, when this is available.  The use of causal diagrams for the
mentioned purposes has been extensively illustrated.

\newpar Our theory provides conditions for testing for mechanistic
interaction in (real or hypothetical) situations in which an
intervention is exerted on variables (even post-treatment ones)
different from the main factors $A$ and $B$ of interest.  We have
discussed the connection between this and the idea of mechanistic
interaction between effects that flow along specific paths in a causal
diagram representation of the problem.

\newpar Importantly, our method does not require the assumption that
$Y$ depends on its causal influences in a functional way.  By relaxing
such an assumption, our method gains applicability in a much wider
range of situations, and confers more leeway on the researcher in the
choice of the conditioning variables in the test.

\newpar Once the conditions for a test of the mechanistic interaction
of interest have been found valid, the actual test involves simple
(and well known) excess risk or superadditivity statistics.  These
tests are valid under prospective sampling and (under assumptions)
retrospective sampling.  In the latter case, a key assumption is that
the response event of interest is rare under any possible
configuration of the causal factors.  In the context of retrospective
case-control studies in epidemiology, this is the well-known rare
disease assumption that typically motivates this kind of studies.

\newpar Finally, our approach embraces the very large class of
applications where the main causal factors, $A$ and $B$, are only
available as a discretized version of the fundamental variables, no
longer available in their original continuous form.

\newpar Various possible enhancements of the method are envisaged, one
of these being the extension of the theory to embrace higher-order
mechanistic interactions.  Equally important will be the application
of the method in a variety of situations and disciplines, from genetic
epidemiology ({\em e.g.\/}, in the identification of gene-environment
interactions) to experimental psychology.  We hope that the proposed
method will help researchers better to identify from data analysis
small sets of interactions underlying mechanisms of scientific
interest.

\section{Acknowledgments}
\label{Acknowledgments}

Carlo Berzuini was partially supported by the FP7-305280 MIMOmics
European Collaborative Project, as part of the HEALTH-2012- INNOVATION
scheme.

\bibliographystyle{plainnat} \bibliography{bibliografia}
\end{document}